\numberwithin{equation}{section}
\theoremstyle{plain}
\newtheorem{Definition}{Definition}[section]
\newtheorem{Remark}{Remark}[section]
\newtheorem{Theorem}{Theorem}[section]
\newtheorem{Lemma}{Lemma}[section]
\newtheorem{Proposition}{Proposition}[section]
\newtheorem{Corollary}{Corollary}[section]
\newtheorem{Assumption}{Assumption}[section]
\newcommand{\bee}{\begin{equation*}}
\newcommand{\eee}{\end{equation*}}
\newcommand{\be}{\begin{equation}}
\newcommand{\ee}{\end{equation}}
\newcommand{\bi}{\begin{itemize}}
\newcommand{\ei}{\end{itemize}}
\def \E{\mathbb{E}}
\def \F{\mathbb{F}}
\def \N{\mathbb{N}}
\def \P{\mathbb{P}}
\def \Q{\mathbb{Q}}
\def \R{\mathbb{R}}
\def \B{{\mathcal B}}
\def \Ec{{\mathcal E}}
\def \tEc{{\widetilde{\mathcal E}}}
\def \Fc{{\mathcal F}}
\def \Pc{{\mathcal P}}
\def \Lc{{\mathcal L}}
\def \Rb{{\overline R}}
\def \eps{\varepsilon}
\title{Optimal Equilibria for Multi-dimensional  Time-inconsistent Stopping Problems\thanks{We would like to thank Tiziano De Angelis for introducing the fine topology to the first author. This led to the initiation of this project.}}
\author{Yu-Jui Huang\thanks{
University of Colorado, Department of Applied Mathematics, Boulder, CO 80309-0526, USA, email: \texttt{yujui.huang@colorado.edu}. Partially supported by National Science Foundation (DMS-1715439) and the University of Colorado (11003573).}
 \and Zhenhua Wang\thanks{
University of Michigan, Department of Mathematics, Ann Arbor, MI 48109-1043, USA, email: \texttt{zhenhuaw@umich.edu}.}
}
\begin{document}
\maketitle

\begin{abstract}
We study an optimal stopping problem under non-exponential discounting, where the state process is a {\it multi-dimensional} continuous strong Markov process. The discount function is taken to be log sub-additive, capturing {\it decreasing impatience} in behavioral economics. On strength of probabilistic potential theory, we establish the existence of an optimal equilibrium among a sufficiently large collection of equilibria, consisting of finely closed equilibria satisfying a boundary condition. 
This generalizes the existence of optimal equilibria for one-dimensional stopping problems in prior literature.  
\end{abstract}

\textbf{MSC (2010):} 
60G40,  	
60J45,  	
91G80.
\smallskip

\textbf{Keywords:} optimal stopping, time inconsistency, non-exponential discounting, probabilistic potential theory, optimal equilibria.


\section{Introduction}
Whereas {\it consistent planning} in Strotz \cite{Strotz55} has been widely applied to time-inconsistent control and stopping problems, it gains little recognition as a two-phase procedure. First, an agent should figure out strategies that he will actually follow over time, the so-called {\it equilibria} in the literature {\it (Phase I)}. Then, the agent needs to, according to Strotz \cite[p.173]{Strotz55}, ``find the best plan among those he will actually follow'' {\it (Phase II)}.

Thanks to the continuous-time formulation initiated in Ekeland and Lazrak \cite{EL06}, there has been vibrant research on time-inconsistent problems in the communities of stochastic control and mathematical finance; see e.g. \cite{ekeland2008investment}, \cite{bjork2017time}, \cite{HJZ12}, \cite{bjork2014mean}, \cite{ekeland2012time}, \cite{yong2012time}, among many others. Note that most of the developments were focused on \textit{Phase I} of consistent planning. To the best of our knowledge, in-depth investigation of \textit{Phase II} was initiated fairly recently in Huang and Zhou \cite{HZ19, huang2017optimal}. 

The notion of an {\it optimal equilibrium} is proposed for the first time in \cite{HZ19}: an equilibrium is optimal if it generates a larger value than any other equilibrium does, {\it everywhere} in the state space. This seems a fairly strong optimality criterion, as it requires a subgame perfect Nash equilibrium to be dominant on the entire state space. Nonetheless, \cite{HZ19} derives an optimal equilibrium for a discrete-time stopping problem under non-exponential discounting. Corresponding results in continuous time have been established in \cite{huang2017optimal}, relying on a detailed analysis of one-dimensional diffusions. Specifically, a key assumption in \cite{huang2017optimal} is that for any initial state $x\in\R$, the state process $X$ satisfies
\begin{equation}\label{1-d condition}
\P^x[\overline X_t > x] = \P^x[\underline X_t < x] =1\quad \hbox{for all $t> 0$}, 
\end{equation}
where $\overline X_t:= \max_{s\in[0,t]} X_s$ (resp. $\underline X_t := \min_{s\in[0,t]} X_s$) is the running maximum (resp. minimum) of $X$. This condition ensures that $X$ is ``diffusive'' enough, so that whenever $X$ reaches the boundary of a Borel subset $R$ of $\R$, it enters $R$ immediately. This allows us to focus on only {\it closed} equilibrium,  with no need to deal with equilibria of possibly pathological forms. Note that \eqref{1-d condition} is not restrictive in the one-dimensional case (i.e. $d=1$): any {\it regular} diffusion (in the sense of \cite[Definition V.45.2]{RW-book-00}) readily fulfills \eqref{1-d condition}, as recently observed in \cite[Remark 3.1]{HY19}.     

There is, however, no natural extension of \cite{huang2017optimal} to higher dimensions (i.e. $d\ge 2$). Inherently one-dimensional, \eqref{1-d condition} and its related consequences are easily violated by multi-dimensional diffusions; see the discussion below Proposition~\ref{prop:1-d result} for details. 

This paper aims at establishing the existence of an optimal equilibrium for a multi-dimensional stopping problem under non-exponential discounting. With no convenient regularity condition, such as \eqref{1-d condition}, in higher dimensions, we take a very different approach 
on strength of probabilistic potential theory. First, in terms of regularity of the state process, we assume only that a reference measure for $X$ exists (Assumption~\ref{asm:reference measure}). This serves as a minimal condition to ensure Borel measurability of involved stopping policies; see Corollary~\ref{coro:Theta(R) Borel}. Since this assumption is satisfied as long as a transition density of $X$ exists (Lemma~\ref{lem:density}), Corollary~\ref{coro:Theta(R) Borel} covers and generalizes all measurability results in \cite{huang2018time}, \cite{huang2017stopping}, and \cite{huang2017optimal}. Next, we set out to devise 
a sufficiently large collection $\Lc$ of stopping policies, among which we will find an optimal equilibrium. The flexibility that $\Lc$ is not required to contain all Borel stopping policies is important: 
focusing on more amenable and practically relevant stopping policies will facilitate the search for an optimal equilibrium (optimal among this class $\Lc$). 
In a sense, for the case $d=1$, tractability comes from enhanced regularity of $X$, i.e. \eqref{1-d condition}; for $d\ge 2$, it comes from additional structures of stopping policies. To construct the collection $\Lc$, we first restrict our attention to {\it finely closed} stopping policies, and then require the difference between them and their Euclidean closures to be sufficiently small, or more precisely, {\it semipolar}. The first restriction is without loss of any generality: as shown in Proposition \ref{prop.equiv.finely}, the fine closure of an equilibrium remains an equilibrium with the same values. The second restriction, on the other hand, is to exclude  stopping policies with somewhat pathological and practically irrelevant forms; see Remark~\ref{rem:exclude Q}. This class $\Lc$ already includes all closed stopping policies, and is closed under finite unions and countable intersections; see Lemma~\ref{lemma.set.0}. 

Our goal is to find an equilibrium optimal among $\tEc := \Ec\cap\Lc$, where $\Ec$ denotes the set of all equilibria. By invoking {\it Hunt's hypothesis} (Assumption~\ref{assum.hunt}), which states that {\it every semipolar set is polar}, we obtain two important consequences. First, given a stopping policy $R$, if we perform on $R$ one round of the fixed-point iteration in \cite{huang2018time} and get a smaller policy, this smaller one has to be an equilibrium; see Proposition~\ref{prop.decrease.theta}. Second, for any $R\in\tEc\subseteq \Lc$, since the difference between $R$ and $\overline R$, its Euclidean closure, is now polar (i.e. inaccessible to the process $X$), the first hitting time to $R$, denoted by $\rho_R$, becomes much more tractable as it must coincide with $\rho_{\overline R}$. Based on all this, a machinery for improving equilibria in $\tEc$ is developed: for any $R$, $T\in\tEc$, there exists another equilibrium in $\tEc$, contained in $R\cap T$, which generates larger values than both $R$ and $T$; see Proposition~\ref{prop.intersection}. By carrying out this machinery recursively, we construct an equilibrium $R_\star$ that is optimal among $\tEc$, with $\overline{R_\star} = \bigcap_{R\in\tEc} \overline R$; see Theorem~\ref{thm.optimal}, the main result of this paper. 

Given that $\rho_R=\rho_{\overline R}$ for all $R\in\tEc$, it is tempting to believe that we could restrict our attention from $\tEc$ to $\overline \Ec:= \{R\in\Ec: R\ \hbox{is closed}\}$, without loss of generality. This would make our results completely in line with the one-dimensional analysis in \cite{huang2017optimal}, which is built upon $\overline\Ec$. This is however {\it not} the case, as the relation ``$R\in\tEc$ if and only if $\overline R\in\Ec$'' does not hold in general; see Remark~\ref{rem:to closedE} for explanations, and Section~\ref{sec:eg} for an explicit example that demonstrates $R\in\tEc$ but $\overline R\notin \Ec$. 

\subsection{Related Recent Developments}
Lately, marked progress has been made in applying Stortz' equilibrium idea to time-inconsistent stopping problems in continuous-time diffusion models. It can be roughly categorized into two different kinds of definition of an equilibrium. The first one is based on the fixed-point iterative approach introduced in Huang and Nguyen-Huu \cite{huang2018time}, and further developed in \cite{huang2017stopping}, \cite{huang2017optimal}, and \cite{HY19}. The second definition adapts the stochastic control formulation in \cite{EL06} to the stopping context; this includes Ebert, Wei, and Zhou \cite{EWZ18} and Christensen and Lindensj\"{o} \cite{CL18, CL20}. A large part of these developments, notably, rely on the one-dimensional structure of the state process. 
The arguments in this paper can potentially shed new light on extending previous one-dimensional results to higher dimensions. 

Recently, in a continuous-time Markov chain model, Bayraktar, Zhang, and Zhou \cite{BZZ19} introduced a third kind of definition of an equilibrium for optimal stopping, based on the notion ``strong equilibrium'' introduced in Huang and Zhou \cite{HZ19-MOR} for stochastic control. A detailed analysis on the above three kinds of definition is performed in \cite{BZZ19}, particularly showing that an optimal equilibrium of the first kind (i.e., the same notion of an optimal equilibrium as in this paper) is in fact an equilibrium of the third kind. It is of great interest as future research to extend this finding to a multi-dimensional diffusion model, like the one in this paper.
For a detailed discussion on different definitions of an equilibrium in the more general stochastic control and stopping context, one may refer to \cite[Section 3.1]{hernandez2020me}.

In discrete-time models, two recent works give new results on {\it Phase II} of consistent planning. 
The existence of optimal equilibria for a time-inconsistent dividend problem has been analyzed in detail by Jin and Zhou \cite{JZ20}. 
Through a so-called ``myopic adjustment" procedure, \cite{christensen2019time} introduces the stability of equilibria, providing an alternative path to go beyond {\it Phase I} of consistent planning: among all equilibria, realistic decision makers choose the ones that are stable.  

\subsection{Organization of the Paper}
Section \ref{sec.modelsetup} introduces the time-inconsistent stopping problem, a new measurability result, and the idea of optimal equilibria. 
Section~\ref{sec:Lc} devises a suitable set of stopping policies for the rest of the paper to focus on. This set encodes desirable properties from probabilistic potential theory, yet remains general enough to cover practically relevant stopping policies. Under Hunt's hypothesis, Section~\ref{sec:optimalE} develops a machinery to improve equilibria, by which an optimal equilibrium is constructed. Section~\ref{sec:eg} presents an example to demonstrate explicitly that the Euclidean closure of a finely closed equilibrium need not be an equilibrium.


\section{The Setup and Preliminaries}\label{sec.modelsetup}
Let $\B$ be the Borel $\sigma$-algebra of $\R^d$, and $\Pc$ be the set of all probability measures on $(\R^d, \B)$. 
Consider an $\R^d$-valued time-homogeneous continuous strong Markov process $(X_t)_{t\geq 0}$. Let the collection $(P_t: \R^d\times \B\to [0,1])_{t\ge 0}$ denote the {\it transition function} of $X$. 
If there exist a collection $(p_t:\R^d\times\R^d\to \R_+)_{t\ge 0}$ of Borel measurable functions and a measure $\lambda$ on $(\R^d,\B)$ such that
\begin{equation}\label{density}
P_t(x,A) = \int_A p_t(x,y)\lambda(dy),\quad \forall t\ge0,\ x\in\R^d,\ \hbox{and}\ A\in\B, 
\end{equation}
we call $(p_t)_{t\ge 0}$ a {\it transition density} of $X$ with respect to $\lambda$.

On the path space $\Omega:= C([0,\infty);\R^d)$, the set of continuous functions mapping $[0,\infty)$ to $\R^d$, let $\mathbb F^X=\{\Fc^X_t\}_{t\ge 0}$ be the raw filtration generated by $X$. For each $\mu\in \Pc$, we denote by $\B^\mu$ the completion of $\B$ by $\mu$ and by $\mathbb F^\mu =\{\Fc_t^\mu\}_{t\ge 0}$ the $\mu$-augmentation of $\mathbb F^X$. We further consider the universal filtration $\mathbb F=\{\Fc_t\}_{t\ge 0}$ defined by
$\Fc_t:=\bigcap_{\mu\in \Pc} \Fc^\mu_t$ for all $0\leq t< \infty$, and denote by $\mathcal T$ the set of $\F$-stopping times. For any $x\in\R^d$, let $X^x$ denote the process $X$ with initial value $X_0=x$. Moreover, the probability measure on $(\Omega,\Fc^X_\infty)$ generated $X^x$ (i.e. the law of $(X^x_t)_{t\geq 0}$) is denoted by $\P^x$ 
and the expectation taken under $\P^x$ is denoted by $\E^x$. 


Consider a payoff function $f: \R^d\rightarrow \R_+$, assumed to be continuous. Also consider a discount function $\delta: [0, \infty)\rightarrow [0,1]$, assumed to be continuous, nonincreasing, and satisfy $\delta(0)=1$ and $\lim_{t\to\infty}\delta(t)=0$.  A classical optimal stopping problem is formulated as
\begin{equation}\label{classical}
\sup_{\tau\in\mathcal T} \E^x[\delta(\tau)f(X_{\tau})].
\end{equation}
It is well-known (see e.g. \cite[Appendix D]{KS-book-98} and \cite{Shiryayev-book-78}) that under fairly general conditions, for any initial state $x\in\R^d$, there exists an optimal stopping time $\widetilde\tau_x\in \mathcal T$ for \eqref{classical}. However, as long as the discount function is {\it not} of the exponential form $\delta(t) := e^{-\alpha t}$ with $\alpha>0$, the problem \eqref{classical} is in general {\it time-inconsistent}. That is, optimal stopping times obtained at different moments, such as $\widetilde\tau_x$ at time $0$ and $\widetilde\tau_{X^x_t}$ at time $t>0$, may not be consistent with each other. This is problematic: even if a maximizer $\widetilde\tau_x$ of \eqref{classical} can be found, our future self at any time $t>0$ is tempted to employ $\widetilde\tau_{X^x_t}$, optimal for him at time $t$, rather than stick with $\widetilde\tau_x$; see \cite[Section 2]{huang2018time} for a detailed demonstration of such time inconsistency. 


Throughout this paper, we will assume that the discount function $\delta$ satisfies 
\begin{equation}\label{DI}
\delta(s)\delta(t)\leq\delta(s+t),\quad\forall\,s,t>0.
\end{equation}
This covers a wide range of non-exponential discount functions from empirical studies; see the discussion below \cite[Assumption 3.12]{huang2018time}. In economic terms, \eqref{DI} captures {\it decreasing impatience}, the fact that people discount more steeply over time intervals closer to the present. This feature of empirical discounting is well-documented in behavioral economics; see e.g. \cite{Thaler81}, \cite{LT89}, and \cite{LP92}.


Under \eqref{DI}, time inconsistency is a genuine problem. Strotz \cite{Strotz55} proposes {\it consistent planning} as a solution: an agent should take into account his future selves' disobedience, and find a strategy that once being enforced, none of his future selves would deviate from. Such strategies, called {\it equilibiria} in the literature, can be formulated using the game-theoretic framework initiated in \cite{huang2018time}. 


\subsection{A Game-theoretic Framework}
Under our time-homogeneous Markovian setup, we will focus on first entry times to Borel subsets of $\R^d$, i.e. $\tau_R:=\inf\{t\ge 0: X_t\in R\}$ for $R\in\B$, instead of all general stopping times. The involved region $R\in\B$ will be called a {\it stopping policy} constantly. 

Suppose that an agent initially planned to take $R\in\B$ as his stopping policy. Now, at any state $x\in\R^d$, the agent carries out the game-theoretic reasoning: ``assuming that all my future selves will follow $R\in\B$, what is the best stopping strategy today in response to that?'' To this end, the agent compares the payoff of immediate stopping, i.e. $f(x)$, and the payoff of continuation, i.e. 
\begin{equation}\label{J}
J(x, R):=\E^x[\delta(\rho_R)f(X_{\rho_R})], 
\end{equation}
where $\rho_R$ is the first hitting time of $X$ to $R$ defined by 
\begin{equation}\label{rho_R}
\rho_R:=\inf\{t>0: X_t\in R\}.
\end{equation}
As explained in detail in \cite[Section 2.1]{huang2017optimal} (see also \cite[Section 2]{huang2017stopping} or \cite[Section 3.1]{huang2018time}), the best stopping strategy for the agent at $x\in\R^d$ is the first entry time to the region
\begin{equation}\label{Theta}
\Theta(R):=S(R)\cup (I(R)\cap R),
\end{equation}
where
\begin{equation}\label{S, I, C}
\begin{aligned}
S(R) &:=\{x\in \R^d : J(x, R)<f(x)\},\\
I(R)&:=\{x\in \R^d: J(x, R)=f(x)\},\\
C(R)&:=\{x\in \R^d: J(x, R)>f(x)\}.
\end{aligned}
\end{equation} 

\begin{Remark}
In \eqref{J}, we need specifically the first hitting time $\rho_R$, instead of the first entry time $\tau_R=\inf\{t\ge 0: X_t\in R\}$. This was explained in the discussion below \cite[(3.5)]{huang2018time}. 
\end{Remark}

Appropriate conditions are needed to make the above formulation mathematically rigorous. First, to ensure that \eqref{J} is well-defined, we impose the integrability condition: for any $x\in\R^d$, 
\begin{equation}\label{assum.bound}
\E^x\bigg[\sup_{t\in [0,\infty]}\delta(t)f(X_t)\bigg]<\infty, 
\end{equation}
where we take $\delta(\infty)f(X^x_\infty):=\limsup_{t\to\infty} \delta(t)f(X^x_t)$, which is in line with \cite[Appendix D]{KS-book-98}. 
Second, to ensure that $\Theta(R)$ in \eqref{Theta} is indeed a stopping policy, i.e. $\Theta(R)\in\B$, 
certain regularity of $X$ is required, which will be investigated closely below. 


\subsection{Measurability}
Let us first recall the following concepts from probabilistic potential theory. 

\begin{Definition}[\cite{blumenthal2007markov}, Definition II.3.1]
For any $R\in \B$ and $x\in \R^d$, the {\it potential of} $R$ is defined by 
\bee
U(x,R):=\int_0^\infty P_t(x, R)dt.
\eee 
We say $R\in \B$ is of zero potential if $U(x,R)=0$ for all $x\in\R^d$.
\end{Definition}

\begin{Definition}[\cite{blumenthal2007markov}, Definition V.1.1]
A measure $\lambda$ on $\R^d$ is called a {\it reference measure} for $X$, provided that it is a countable sum of finite measures such that $R\in \B$ is of zero potential if and only if $\lambda(R)=0$. 
\end{Definition}

Let us introduce the first main assumption of this paper.

\begin{Assumption}\label{asm:reference measure}
A {reference measure} for $X$ exists. 
\end{Assumption}

A convenient sufficient condition of Assumption~\ref{asm:reference measure} is provided in the next result, whose proof is relegated to Appendix~\ref{sec:proof of lem:density}. 

\begin{Lemma}\label{lem:density}
Suppose that $X$ has a transition density with respect to a measure $\lambda$ on $\R^d$. If $\lambda$ is a countable sum of finite measures, then for any $\alpha>0$, $\lambda^{\alpha}:\B\to\R_+$ defined by 
\[
\lambda^{\alpha}(R):= \int_{\R^d}\left(\int_0^\infty e^{-\alpha t} P_t(x,R)dt \right)\lambda(dx)
\]
is a reference measure for $X$.
\end{Lemma}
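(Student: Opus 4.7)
The goal is to verify two things: (i) $\lambda^\alpha$ is a countable sum of finite measures, and (ii) for any $R\in\B$, $R$ has zero potential if and only if $\lambda^\alpha(R)=0$.

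For (i), I would first note that for each fixed $x$, the map $R\mapsto \int_0^\infty e^{-\alpha t}P_t(x,R)\,dt$ is a Borel measure (bounded above by $1/\alpha$), so by Tonelli $\lambda^\alpha$ is indeed a measure on $(\R^d,\B)$. Writing $\lambda=\sum_n \lambda_n$ with $\lambda_n$ finite and defining $\lambda_n^\alpha$ by the same formula with $\lambda_n$ in place of $\lambda$, the estimate $\lambda_n^\alpha(R)\le \lambda_n(\R^d)/\alpha<\infty$ yields the required decomposition.

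For the easy direction of (ii), if $U(x,R)=0$ for every $x\in\R^d$, then $P_t(x,R)=0$ for Lebesgue-a.e.\ $t$, so that $R_\alpha(x,R):=\int_0^\infty e^{-\alpha t}P_t(x,R)\,dt$ vanishes pointwise, and integrating against $\lambda$ gives $\lambda^\alpha(R)=0$.

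The substantive direction is the converse, and this is where the transition-density hypothesis does the real work. Suppose $\lambda^\alpha(R)=0$. By Tonelli, $R_\alpha(\cdot,R)=0$ holds $\lambda$-a.e. on $\R^d$. To upgrade this to ``everywhere,'' I would use the Markov/Chapman–Kolmogorov identity: for any $x\in\R^d$ and $s>0$,
\[
\int_s^\infty e^{-\alpha t}P_t(x,R)\,dt
= e^{-\alpha s}\int_{\R^d} R_\alpha(y,R)\,P_s(x,dy)
= e^{-\alpha s}\int_{\R^d} R_\alpha(y,R)\,p_s(x,y)\,\lambda(dy).
\]
Because the inner integrand vanishes for $\lambda$-a.e.\ $y$, the right-hand side is $0$ for every $x$ and every $s>0$. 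Sending $s\downarrow 0$ and applying monotone convergence yields $R_\alpha(x,R)=0$ for all $x\in\R^d$. Since $e^{-\alpha t}>0$, this forces $P_t(x,R)=0$ for Lebesgue-a.e.\ $t$, hence $U(x,R)=\int_0^\infty P_t(x,R)\,dt=0$ for all $x$.

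The main (only) subtlety is precisely this bootstrap step: a priori, the vanishing of $R_\alpha(\cdot,R)$ only holds $\lambda$-a.e., and one needs the absolute continuity of $P_s(x,\cdot)$ with respect to $\lambda$ (i.e., the density $p_s$) to propagate the null set through the semigroup and conclude pointwise vanishing. Without the density hypothesis the argument breaks, which is exactly why the assumption enters the statement.
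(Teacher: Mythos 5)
Your proof is correct, and the hard direction is handled by a genuinely different (and arguably cleaner) route than the paper's. The paper, starting from $\lambda^\alpha(R)=0$, sets $E:=\{x:U^\alpha(x,R)>0\}$, uses the transition density to show $U^\alpha(\cdot,E)\equiv 0$, and then runs a strong-Markov contradiction argument: it splits $R=R_1\cup R_2$ with $R_1\subseteq E$ and $R_2:=\{x\in R:U^\alpha(x,R)=0\}$, writes $U^\alpha(x^*,R_2)=\E^{x^*}\bigl[e^{-\alpha\rho_{R_2}}U^\alpha(X_{\rho_{R_2}},R_2)\bigr]$ and concludes via the (self-referential) definition of $R_2$. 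You instead propagate the $\lambda$-null set forward through the semigroup directly: the Chapman--Kolmogorov identity
\[
\int_s^\infty e^{-\alpha t}P_t(x,R)\,dt = e^{-\alpha s}\int_{\R^d}R_\alpha(y,R)\,p_s(x,y)\,\lambda(dy)
\]
kills the tail for every $x$ and every $s>0$ once the density makes $P_s(x,\cdot)\ll\lambda$, and monotone convergence as $s\downarrow 0$ gives $R_\alpha(x,R)=0$ pointwise, hence $U(x,R)=0$. Your argument avoids the strong Markov property at random times and avoids the self-referential decomposition $R=R_1\cup R_2$ entirely, at the price of requiring the explicit Chapman--Kolmogorov manipulation; the paper's version threads more of the standard potential-theoretic toolkit (hitting operators, a cited result from Chung to pass from $U^\alpha=0$ to $U=0$) but is less transparent. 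You also verify that $\lambda^\alpha$ is a countable sum of finite measures, which the statement requires but the paper's proof does not address at all; your decomposition $\lambda^\alpha=\sum_n\lambda_n^\alpha$ with $\lambda_n^\alpha(\R^d)\le\lambda_n(\R^d)/\alpha$ handles this cleanly.
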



Lemma~\ref{lem:density} indicates that Assumption~\ref{asm:reference measure} is not restrictive for a wide range of applications, as a large class of diffusion processes have a transition density with respect to the Lebesgue measure in $\R^d$. For the $d$-dimensional Brownian motion $B$, such a transition density is $p_t(x, y)= (2\pi t)^{-d/2}\exp(-\frac{|x-y|^2}{2t})$; see \cite[Exercise 5.6.17]{karatzas1998brownian}. For an It\^{o} diffusion given by
\begin{equation}\label{diffusion}
dX_t = b(X_t)dt + \sigma(X_t)dB_t, 
\end{equation}
as long as there exists a weak solution unique in distribution, and the coefficients $b$ and $\sigma$ are continuous and grow at most linearly, the transition density of $X$ with respect to the Lebesgue measure exists and is characterized by the fundamental solution to a Cauchy problem; see \cite[p.369]{karatzas1998brownian}. For general Markov processes, we refer readers to  \cite[Proposition V.1.2]{blumenthal2007markov} for a general sufficient condition of the existence of a reference measure. 

When a reference measure exists, we have the following handy approximation for hitting times; see Proposition 10 in \cite[Section 3.5]{chung2006markov} and \cite[Exercise V.1.20]{blumenthal2007markov}.

\begin{Lemma}\label{lemma.hypothesis.l}
Suppose Assumption~\ref{asm:reference measure} holds. For any $R\in \B$, there exist a nondecreasing sequence of compact sets $K_n\subseteq R$ such that $\rho_{K_n}\to\rho_R$ $\P^x$-a.s. for all $x\in\R^d$. 
\end{Lemma}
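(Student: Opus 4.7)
The plan is to combine inner regularity of the reference measure $\lambda$ with the defining property that a Borel set is $\lambda$-null iff it has zero potential, equivalently, iff it is $\P^x$-a.s.\ never visited by $X$ for every $x \in \R^d$. The strategy is to replace $R$ by a $\sigma$-compact subset $K_\infty \subseteq R$ whose complement in $R$ is $\lambda$-null, and then pass to a compact exhausting sequence.

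First I would use the decomposition $\lambda = \sum_k \lambda_k$ with each $\lambda_k$ finite. By inner regularity of finite Borel measures on the Polish space $\R^d$, for each pair $(n,k)$ one can choose a compact set $L_{n,k} \subseteq R$ with $\lambda_k(R \setminus L_{n,k}) < 1/n$. Setting $K_n := \bigcup_{j\le n,\,k\le n} L_{j,k}$ produces a nondecreasing sequence of compact subsets of $R$ such that $K_\infty := \bigcup_n K_n$ satisfies $\lambda(R \setminus K_\infty) = 0$. By Assumption~\ref{asm:reference measure}, the residual set $N := R \setminus K_\infty$ is therefore of zero potential, i.e.\ $U(x, N) = 0$ for every $x$.

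Next, I would invoke the classical result from probabilistic potential theory cited immediately after the statement (Proposition 10 in \cite[Section 3.5]{chung2006markov}, equivalently \cite[Exercise V.1.20]{blumenthal2007markov}) to upgrade $U(\cdot,N) \equiv 0$ to the stronger assertion $\P^x[\rho_N < \infty] = 0$ for every $x \in \R^d$. This yields $\rho_R = \rho_{K_\infty}$ $\P^x$-a.s. Finally, monotonicity $K_n \subseteq K_{n+1} \subseteq K_\infty$ gives $\rho_{K_n} \ge \rho_{K_{n+1}} \ge \rho_{K_\infty}$; conversely, whenever $X_s \in K_\infty$ one has $X_s \in K_n$ for some $n$, so $\rho_{K_n} \le s$, giving $\lim_n \rho_{K_n} \le \rho_{K_\infty}$. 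Combining the two inequalities yields $\rho_{K_n} \downarrow \rho_{K_\infty} = \rho_R$ $\P^x$-a.s.

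The main obstacle is the step bridging the analytic statement $U(x,N)=0$ and the probabilistic statement $\P^x[\rho_N<\infty]=0$. In analytic terms, zero potential only says that $X$ spends zero Lebesgue time in $N$, which a priori is weaker than never hitting $N$; promoting it to ``$N$ is polar'' requires the full strong Markov structure together with the existence of a reference measure (Hunt's Hypothesis (L) in classical potential theory). This is exactly the content of the results cited from \cite{chung2006markov} and \cite{blumenthal2007markov}, and I would appeal to them directly rather than reprove this foundational fact here.
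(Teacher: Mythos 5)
The paper itself does not prove this Lemma; it merely cites Proposition 10 in \cite[Section 3.5]{chung2006markov} and \cite[Exercise V.1.20]{blumenthal2007markov}, both of which \emph{are} the Lemma in its entirety. Your proof thus has to stand on its own, and unfortunately it rests on a step that is false.

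The pivotal claim in your argument is that a Borel set $N$ with $U(x,N)=0$ for all $x$ (zero potential) must be polar, i.e.\ $\P^x[\rho_N<\infty]=0$ for all $x$, once a reference measure exists. This is not true and is not the content of the cited results. For a concrete refutation, take $X$ to be two-dimensional Brownian motion with reference measure $\lambda=$ Lebesgue, and let $N=[0,1]\times\{0\}$, a line segment. Then $\lambda(N)=0$, so $N$ has zero potential, yet $\rho_N<\infty$ $\P^x$-a.s.\ for every $x\in\R^2$ (indeed every point of $N$ is regular for $N$). The correct hierarchy under a reference measure is roughly \emph{polar} $\subsetneq$ \emph{semipolar} $\subsetneq$ \emph{zero potential}, and the two inclusions are strict. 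The upgrade you need is simply not available, and your self-acknowledged ``main obstacle'' is in fact where the argument collapses.

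This also exposes why the first part of your construction is the wrong tool. Selecting compacts $K_n\subseteq R$ by $\lambda$-inner regularity only controls $\lambda(R\setminus K_\infty)$, and this can already be zero for the trivial choice $K_n=\emptyset$ when $\lambda(R)=0$, as in the line-segment example above where $R=N$ is nonetheless hit with probability one. The actual proof in Chung and in Blumenthal--Getoor does not go through any such null-set argument. It instead exhibits the set function $A\mapsto\int \E^x\!\left[e^{-\alpha\rho_A}\right]\xi(dx)$ (with $\xi$ a probability equivalent to $\lambda$) as a Choquet capacity alternating of order two, invokes Choquet's capacitability theorem to approximate this capacity from within by compacts, and then translates capacitary convergence into a.s.\ convergence of the hitting times via a Borel--Cantelli/martingale argument. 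In short, you must do inner approximation with respect to the \emph{capacity} induced by the hitting operator, not with respect to the reference measure; measure-zero is simply too weak an exceptional-set notion for hitting times.
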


Borel measurability of the map $x\mapsto J(x, R)$ can then be established. 

\begin{Proposition}\label{prop:J Borel} 
\begin{itemize}
\item [(i)] For any closed $R\in\B$, $x\mapsto J(x, R)$ is Borel measurable. 
\item [(ii)] Assume Assumption~\ref{asm:reference measure} and that \eqref{assum.bound} holds for all $x\in\R^d$. 
Then, for any $R\in \B$, $x\mapsto J(x, R)$ is Borel measurable.   
\end{itemize}
\end{Proposition}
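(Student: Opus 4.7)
The plan is to prove the two parts in sequence, with part (ii) reducing to part (i) via the compact approximation in Lemma~\ref{lemma.hypothesis.l}.

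For part (i), fix a closed $R\in\B$. Continuity of the sample paths of $X$ together with closedness of $R$ makes $\rho_R$ an $\F$-stopping time, since $\{\rho_R\leq t\}$ can be expressed using only countably many $X_s$ with $s\in(0,t]\cap\Q$. Hence, with the convention $\delta(\infty)f(X_\infty):=\limsup_{t\to\infty}\delta(t)f(X_t)$, the random variable $Y:=\delta(\rho_R)f(X_{\rho_R})$ is nonnegative and $\Fc^X_\infty$-measurable. The claim then follows from the standard Markov-process fact that, for a process whose transition kernels $P_t(\cdot,B)$ are Borel in $x$ for every $B\in\B$ and $t\geq 0$, the map $x\mapsto\E^x[Y]$ is Borel measurable for every nonnegative $\Fc^X_\infty$-measurable $Y$. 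I would verify this first for cylinder functions $Y=\prod_{i=1}^k g_i(X_{t_i})$ with $g_i$ bounded Borel, by iterated integration against $P_{t_i-t_{i-1}}(\cdot,\cdot)$; extend to all bounded $\Fc^X_\infty$-measurable $Y$ by a monotone class argument; and finally pass to nonnegative $Y$ via monotone convergence.

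For part (ii), let $R\in\B$ be arbitrary. Under Assumption~\ref{asm:reference measure}, Lemma~\ref{lemma.hypothesis.l} supplies a nondecreasing sequence of compact (hence closed) sets $K_n\subseteq R$ with $\rho_{K_n}\downarrow\rho_R$ $\P^x$-a.s.\ for every $x\in\R^d$. By part (i), $x\mapsto J(x,K_n)$ is Borel measurable for each $n$. Continuity of $\delta$, $f$, and of the sample paths of $X$ delivers pointwise convergence $\delta(\rho_{K_n})f(X_{\rho_{K_n}})\to\delta(\rho_R)f(X_{\rho_R})$: on $\{\rho_R<\infty\}$, the hitting times $\rho_{K_n}$ are eventually finite and decrease to $\rho_R$, while on $\{\rho_R=\infty\}$ every $\rho_{K_n}$ equals $\infty$, so the sequence is constantly $\limsup_{t\to\infty}\delta(t)f(X_t)$. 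Since $\delta(\rho_{K_n})f(X_{\rho_{K_n}})\leq\sup_{t\in[0,\infty]}\delta(t)f(X_t)$, which is $\P^x$-integrable by \eqref{assum.bound}, dominated convergence yields $J(x,K_n)\to J(x,R)$ for every $x$, and $J(\cdot,R)$ is therefore Borel as a pointwise limit of Borel functions.

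The main technical point is the monotone-class step in part (i) that promotes Borel measurability of finite-dimensional expectations to Borel measurability of $x\mapsto\E^x[Y]$ for every nonnegative $\Fc^X_\infty$-measurable $Y$; this hinges on Borel measurability of the transition function $P_t(\cdot,B)$ in $x$, which is built into our standing assumptions on $X$. Everything else amounts to routine use of sample-path continuity, stopping-time approximation, and dominated or monotone convergence.
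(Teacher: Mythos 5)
Your proof is correct and follows essentially the same two-step route as the paper: establish Borel measurability of $x\mapsto J(x,R)$ for closed $R$, then use Lemma~\ref{lemma.hypothesis.l} (compact inner approximation of hitting times under Assumption~\ref{asm:reference measure}) together with the integrability bound \eqref{assum.bound} and dominated convergence to handle general $R$. The one place you deviate is purely in bookkeeping: for part (i), the paper invokes a named result on Borel measurability of $x\mapsto\E^x[H]$ for $\Fc^X_\infty$-measurable $H$, whereas you re-derive it from scratch via cylinder functions and a monotone-class argument; both are fine, though citing the standard fact is cleaner. One small imprecision worth tightening: you write that $\rho_R$ is an $\F$-stopping time, but what you actually need (and what your argument with rational time points genuinely establishes for closed $R$ and continuous paths) is that it is an $\mathbb{F}^X$-stopping time — this is what makes $\delta(\rho_R)f(X_{\rho_R})$ $\Fc^X_\infty$-measurable, which is the hypothesis your monotone-class step requires, since for merely $\Fc_\infty$-measurable (universally measurable) $Y$ one only gets universal, not Borel, measurability of $x\mapsto\E^x[Y]$.
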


\begin{proof}
(i) For any $R\in\B$ that is closed, $\rho_R$ is an $\mathbb F^X$-stopping time, thanks to \cite[Problem I.2.7]{karatzas1998brownian}. Hence, the random variable $H(\omega):= \delta(\rho_R(\omega))f(X_{\rho_R}(\omega))$ is $\Fc^X_\infty$-measurable. By \cite[Theorem I.3.6]{blumenthal2007markov}, $x\mapsto\E^x[H]=J(x,R)$ is Borel measurable.

(ii) For any $R\in\B$, Lemma \ref{lemma.hypothesis.l} asserts the existence of compact sets $K_n\subseteq R$ such that $\rho_{K_n}\to\rho_R$ $\P^x$-a.s. for all $x\in\R^d$. Thanks to this, the continuity of $\delta$, $f$, and $t\mapsto X_t$, and \eqref{assum.bound}, we conclude from the dominated convergence theorem that 
\begin{equation}\label{J_n to J}
J(x, K_n)=\E^x[\delta(\rho_{K_n})f(X_{\rho_{K_n}})]\to \E^x[\delta(\rho_R)f(X_{\rho_R})]= J(x,R),\quad \forall x\in \R^d.
\end{equation}
By part (a), $x\mapsto J(x,K_n)$ is Borel measurable for all $n\in\N$. Hence, $J(x, R)$ is Borel measurable in view of \eqref{J_n to J}.
\end{proof}

\begin{Corollary}\label{coro:Theta(R) Borel}
Assume Assumption~\ref{asm:reference measure} and that \eqref{assum.bound} holds for all $x\in\R^d$. 
For any $R\in\B$, we have $\Theta(R)\in\B$. 
\end{Corollary}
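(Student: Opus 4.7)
The plan is to read off the Borel measurability of $\Theta(R)$ directly from Proposition~\ref{prop:J Borel}(ii). Under the hypotheses of the corollary, that proposition gives that for every $R\in\B$ the map $x\mapsto J(x,R)$ is Borel measurable. Since the payoff $f$ is assumed continuous (hence Borel), the map $g_R : x\mapsto J(x,R) - f(x)$ is a Borel function on $\R^d$.

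From the definitions in \eqref{S, I, C}, I would then write
\[
S(R) = g_R^{-1}\bigl((-\infty,0)\bigr) \quad\text{and}\quad I(R) = g_R^{-1}\bigl(\{0\}\bigr),
\]
both of which lie in $\B$ as preimages of Borel sets under a Borel function. Because $R\in\B$ by assumption, $I(R)\cap R\in\B$ as an intersection of two Borel sets, and finally
\[
\Theta(R) = S(R)\cup\bigl(I(R)\cap R\bigr)\in\B
\]
as a finite union of Borel sets. There is no real obstacle here: all the work has been done in Proposition~\ref{prop:J Borel}, and the corollary is essentially a bookkeeping step recording that $\Theta(R)$ is a legitimate stopping policy whenever $R$ is. The only thing to double-check is that the continuity of $f$ (stated in the paragraph introducing the payoff) is indeed in force in this section, which it is.
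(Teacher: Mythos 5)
Your proposal is correct and follows essentially the same route as the paper: invoke Proposition~\ref{prop:J Borel}(ii) to get Borel measurability of $x\mapsto J(x,R)$, deduce that $S(R)$ and $I(R)$ are Borel (you spell out the preimage argument, which the paper leaves implicit), and then conclude $\Theta(R)=S(R)\cup(I(R)\cap R)\in\B$ by closure under Boolean operations.
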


\begin{proof}
For any $R\in\B$, by Proposition~\ref{prop:J Borel} (ii), $x\mapsto J(x,R)$ is Borel measurable. Hence, by definition $S(R)$, $I(R)$, and $C(R)$ all belong to $\B$. It follows that $\Theta(R)=S(R)\cup ( I(R)\cap R)\in \B$.
\end{proof}

By Corollary~\ref{coro:Theta(R) Borel}, $\Theta$ can be viewed as an operator acting on $\B$, i.e. $\Theta:\B\to\B$. An equilibrium can then be formulated as a fixed point of $\Theta$.

\begin{Definition}\label{def.equilibrium}
$R\in\B$ is called an equilibrium if $\Theta(R)=R$. We denote by $\Ec$ the set of all equilibria.
\end{Definition}
It can be checked directly that the entire space $R=\R^d$ is an equilibrium. Moreover, a large number of (or even all) equilibria can be found, by the fixed-point iteration introduced in \cite{huang2018time}: one starts with an arbitrary $R\in\B$, and apply $\Theta$ to it repetitively until an equilibrium is reached; see also Remark~\ref{rem:convergence}.


\subsection{Optimal Equilibria}
Finding equilibria, however, is only the first phase of {\it consistent planning} in Strotz \cite{Strotz55}. In the second phase, the agent should choose the {\it best} one among all equilibria. This has not been studied in the literature, except in \cite{HZ19} and \cite{huang2017optimal}. Following \cite[Section 2.2]{huang2017optimal}, for each $R\in\Ec$, we define the associated value function by
\[
V(x,R) := f(x)\vee J(x,R)\quad \forall x\in\R^d.
\]
\begin{Definition}
Given any $\Ec'\subseteq\Ec$, a set $R\in\Ec'$ is called an optimal equilibrium among $\Ec'$ if for any other $T\in\Ec'$,
 $V(x, R)\geq V(x, T)$ for all $x\in \R^d$.
\end{Definition}
  
In the one-dimensional case (i.e. $d=1$), the existence of an optimal equilibrium among the entire set $\Ec$ is established in \cite{huang2017optimal}. 

\begin{Proposition}[\cite{huang2017optimal}, Theorem 4.1]\label{prop:1-d result}
Suppose that for any $x\in\R$, we have \eqref{1-d condition}, 
\eqref{assum.bound}, and 
\begin{equation}\label{old asm}
\delta(t)f(X^x_t)\rightarrow 0\ \ \text{as}\ \ t\rightarrow \infty\quad \P^x\hbox{-a.s.}
\end{equation} 
Then, the set
\begin{equation}\label{1-d fR}
{R_\star} := \bigcap_{R\in\Ec,\ R\ \text{closed}} R
\end{equation}
is an optimal equilibrium among $\Ec$.
\end{Proposition}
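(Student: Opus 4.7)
The plan is to follow three stages, each leveraging the one-dimensional structure encoded in \eqref{1-d condition}.

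First, I would reduce to closed equilibria. Condition \eqref{1-d condition} implies that for every $x\in\R$ with $x\in\partial R$, the process $X$ enters $R$ instantaneously, i.e.\ every boundary point is regular for $R$. Hence $\rho_R=\rho_{\Rb}$ $\Pas$, which yields $J(\cdot,R)=J(\cdot,\Rb)$ and, by comparing the decompositions \eqref{S, I, C} for $R$ and $\Rb$, $\Theta(\Rb)=\overline{\Theta(R)}$. In particular, if $R\in\Ec$, then $\Rb\in\Ec$ with $V(\cdot,R)=V(\cdot,\Rb)$, so it suffices to produce an equilibrium dominating every element of $\overline\Ec:=\{R\in\Ec:R\text{ closed}\}$.

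Second, I would establish an intersection-improvement lemma: for any $R,T\in\overline\Ec$, there exists $\Rfc\in\overline\Ec$ with $\Rfc\subseteq R\cap T$ and $V(\cdot,\Rfc)\ge V(\cdot,R)\vee V(\cdot,T)$. The natural candidate comes from iterating $\Theta$: set $R_0:=R\cap T$ and $R_{n+1}:=\Theta(R_n)$. One shows inductively that the $R_n$ form a decreasing sequence of closed sets contained in $R\cap T$ and that $\Rfc:=\bigcap_n R_n$ is a fixed point of $\Theta$. Log sub-additivity \eqref{DI} of $\delta$ together with the strong Markov property are the key tools: they imply that delaying the stopping time from $\rho_R$ to $\rho_{\Rfc}$ cannot decrease the value at stopping points of $R$ (since at such points $f \ge J(\cdot,R)$ and the sub-additivity preserves this inequality through conditioning), which propagates the dominance $V(\cdot,\Rfc)\ge V(\cdot,R)$ through the iteration, and symmetrically for $T$.

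Third, I would show that $R_\star$ is itself a closed equilibrium dominating every $R\in\overline\Ec$. Since $\R\setminus R_\star$ is open in $\R$, it decomposes into at most countably many disjoint open intervals; each such interval is disjoint from some $R\in\overline\Ec$, so one can extract a countable sub-family $\{R_n\}\subseteq\overline\Ec$ with $R_\star=\bigcap_n R_n$. Applying the intersection-improvement lemma inductively yields a decreasing sequence $\widetilde R_n\in\overline\Ec$ with $\widetilde R_n\subseteq R_1\cap\dots\cap R_n$ and $V(\cdot,\widetilde R_n)\ge\max_{k\le n}V(\cdot,R_k)$. Passing to the limit $n\to\infty$, using \eqref{assum.bound}, \eqref{old asm}, and dominated convergence applied to the monotone hitting times $\rho_{\widetilde R_n}\uparrow\rho_{R_\star}$, one obtains that $R_\star=\bigcap_n\widetilde R_n$ is a closed equilibrium with $V(\cdot,R_\star)\ge V(\cdot,R_n)$ for every $n$. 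Stage~1 then promotes this to dominance over all of $\Ec$.

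The main obstacle is the limit passage in Stage~3: one must verify that $\Theta$ respects the decreasing countable intersection, so that the monotone limit of closed equilibria is again an equilibrium with the claimed value dominance. This requires continuity of $R\mapsto\rho_R$ along decreasing sequences of closed sets, which in $\R$ is available thanks to the two-sided regularity in \eqref{1-d condition}, combined with \eqref{assum.bound} and the vanishing at infinity \eqref{old asm} to control the payoff in the limit.
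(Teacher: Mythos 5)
The overall three-stage architecture you describe (reduce to closed equilibria, build an intersection-improvement lemma, then extract a countable sub-family and pass to the limit) does match the route taken in \cite{huang2017optimal} and mirrored by this paper's multi-dimensional Theorem~\ref{thm.optimal}. However there are two genuine gaps.

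First, in Stage~1 you assert $\Theta(\Rb)=\overline{\Theta(R)}$. This is overclaimed: under \eqref{1-d condition} one has $R^*=\Rb$ and $J(\cdot,R)=J(\cdot,\Rb)$, which gives $\Theta(\Rb)=S(R)\cup\bigl(I(R)\cap\Rb\bigr)$; nothing forces this to coincide with the Euclidean closure of $\Theta(R)$, and indeed $\Theta(\Rb)$ need not even be closed. What is actually true and sufficient is the weaker transfer statement ``$R\in\Ec \iff R^*=\Rb\in\Ec$ with equal values,'' which this paper proves as Proposition~\ref{prop.equiv.finely} with a careful disjoint-union argument on both sides of \eqref{1_5}, not by any commutation of $\Theta$ with closure.

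Second, and more seriously, Stage~2 defers the core difficulty. You set up an infinite iteration $R_0=R\cap T$, $R_{n+1}=\Theta(R_n)$ and claim ``one shows inductively'' that $\bigcap_n R_n$ is a fixed point of $\Theta$. But the $R_n$ are not obviously closed (the $S$ and $I$ sets are merely Borel), the sequence is not obviously decreasing past $n=1$ without further work, and $\Theta$ does not a priori commute with countable decreasing intersection — this is precisely where the argument could fail. The key insight in the paper, encapsulated in Proposition~\ref{prop.decrease.theta}, is that \emph{no iteration is needed}: once one shows $\Theta(R\cap T)\subseteq R\cap T$, the set $(R\cap T)\setminus\Theta(R\cap T)$ is polar (in the $d=1$ setting this is immediate, since under \eqref{1-d condition} every nonempty Borel set equals its fine closure and hence no nonempty set is semipolar), so $\rho_{\Theta(R\cap T)}=\rho_{R\cap T}$ and therefore $\Theta^2(R\cap T)=\Theta(R\cap T)$ after a single application. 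Your proposal bypasses this one-step stabilization and instead relies on a limit whose legitimacy is exactly the thing in question. Stage~3 is broadly right in spirit (countable extraction via the open-interval decomposition of $\R\setminus R_\star$, then a hitting-time convergence argument as in Lemma~\ref{lemma.hitting.time}), but it inherits the unresolved issue from Stage~2, since you need the intermediate $\widetilde R_n$ to actually lie in $\overline\Ec$ before you can invoke Lemma~\ref{lemma.smaller.eq} and pass to the limit.
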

  
As mentioned in the introduction, \eqref{1-d condition} ensures that whenever $X$ reaches the boundary of a Borel subset $R$ of $\R$, it enters $R$ immediately. This allows us to focus on only {\it closed} equilibrium, as indicated by \eqref{1-d fR}. 
As shown in \cite[Lemma 3.1]{huang2017stopping}, \eqref{1-d condition} is satisfied by a large class of one-dimensional It\^{o} diffusions. Even more generally, any {\it regular} diffusion (in the sense of \cite[Definition V.45.2]{RW-book-00}) fulfills \eqref{1-d condition}, as recently observed in \cite[Remark 3.1]{HY19}.     

Proposition~\ref{prop:1-d result} does {\it not} naturally extend to higher dimensions. First, due to the involved $\overline X$ and $\underline X$ processes, \eqref{1-d condition} is inherently one-dimensional, with no natural extension in higher dimensions. Moreover, the proof of Proposition~\ref{prop:1-d result} relies crucially on a consequence of \eqref{1-d condition}: $\rho_{\{x\}}=0$ $\P^x$-a.s. for all $x\in\R$, i.e. the process $X$ re-visits its initial point immediately. This condition is mostly violated in higher dimensions. For instance, when $X$ is a $d$-dimensional Brownian motion, $\rho_{\{x\}}=\infty$ $\P^x$-a.s. for all $x\in\R^d$, whenever $d\ge 2$. 

The goal of this paper is to establish a multi-dimensional counterpart of Proposition~\ref{prop:1-d result}. With no convenient regularity condition, such as \eqref{1-d condition}, to rely on, we will look for an optimal equilibrium among a sufficiently large subset $\Ec'\subseteq\Ec$, which is allowed to be properly contained in $\Ec$. This flexibility is important:
focusing on more amenable and practically relevant stopping policies will facilitate the search for an optimal equilibrium. 
In a sense, for $d=1$, tractability comes from desirable regularity of $X$, i.e. \eqref{1-d condition}; for $d\ge 2$, it will come from additional structures of stopping policies. The search for an appropriate subset $\Ec'\subseteq \Ec$, which needs to be amenable enough but still sufficiently large, will  be the focus of the next section. 

\begin{Remark}
In the time-consistent case of exponential discounting (i.e., $\delta(s)\delta(t)=\delta(s+t)$ $\forall s,t\ge 0$), the standard optimal stopping time defined by
\[
\hat\tau : = \inf\{t\ge 0 : f(X_t) = U(X_t)\}\in\mathcal T,\quad \hbox{with}\quad U(x):= \sup_{\tau\in\mathcal T}\E^x[\delta(\tau) f(X_\tau)]\ge f(x),
\]
readily yields an optimal equilibrium. 
To see this, note that \cite[Proposition 3.11]{huang2018time} already shows (in a more general time-inhomogeneous setting) that 
$\hat R := \{x\in\R^d: f(x) = U(x)\}$ is an equilibrium. Then, by the definitions of $\hat R$ and $\hat\tau$, we have
\[
f(x)\vee J(x,\hat R) = \E^x[\delta(\hat \tau)f(X_{\hat\tau})] = \sup_{\tau\in\mathcal T}\E^x[\delta(\tau) f(X_\tau)]\ge f(x)\vee J(x,R),\quad \forall R\in\Ec,
\]
where the second equality follows from $\hat\tau$ being an optimal stopping time. Hence, we conclude that $\hat R$, the so-called optimal stopping region in the literature, is an optimal equilibrium. 
\end{Remark}


\section{Finely Closed Equilibria with a Boundary Condition}\label{sec:Lc}
On strength of probabilistic potential theory, a suitable subset of $\Ec$ to focus on will be devised in this section. We will first restrict our attention to {\it finely closed} stopping policies (see Definition~\ref{def.regular} below), and then further require the difference between them and their Euclidean closures to be sufficiently small. As we will see, the first restriction is without loss of any generality, while the second is to exclude equilibria with possibly pathological and practically irrelevant forms. 

\subsection{Finely Closed Stopping Policies}\label{subsec:finely closed}
Let us recall several essential concepts from probabilistic potential theory. 
\begin{Definition}\label{def.regular}
Given $R\in \B$, a point $x\in\R^d$ is said to be regular to $R$ if $\rho_R=0$ $\P^x$-a.s. The set of all regular points to $R$ (with respect to $X$) is denoted by $R^r$, and we call 
\begin{equation}\label{A^*}
R^*: =R\cup R^r.
\end{equation}  
the fine closure of $R$. In addition, $R$ is said to be finely closed if $R=R^*$. 
\end{Definition} 

\begin{Remark}
By Blumenthal's zero-one law (Theorem 6 in \cite[Section 2.3]{chung2006markov}), for any $x\in\R^d$ and $R\in\B$, $\P^x(\rho_R =0)$ is either 0 or $1$. Hence, if $x\in\R^d$ is not regular to $R$, then $\rho_R>0$ $\P^x$-a.s.
\end{Remark}

\begin{Remark}\label{rem:fine topology}
Adding to a set all its regular points, as in \eqref{A^*}, is the closure operation under the fine topology (see e.g. \cite[p.107]{chung2006markov}). Hence, for any $R\in\B$, 
\begin{equation}\label{A^**=A^*}
(R^*)^*=R^*,\quad \hbox{or}\quad (R^*)^r\subseteq R^*.
\end{equation} 
\end{Remark}

\begin{Remark}\label{rem:X_rhoA in A^*}
For any $R\in\B$, 
\begin{equation}\label{X_rhoA in A^*}
X_{\rho_R}\in R^*\ \ \P^x\hbox{-a.s. on}\ \{\rho_R<\infty\},\quad \forall x\in\R^d.
\end{equation}
Indeed, for $\P^x$-a.e. $\omega\in\{\rho_R<\infty\}$, if $X_{\rho_R}(\omega)\notin R$, by the definition of $\rho_R$, $X_{\rho_R}(\omega)$ must be regular to $R$, i.e. $X_{\rho_R}(\omega)\in R^r$. Hence, $X_{\rho_R}(\omega)\in R\cup R^r=R^*$.  
\end{Remark}

\begin{Remark}\label{rem:A^r in I}
Fix $R\in\B$. For any $x\in R^r$, as $\rho_R=0$ $\P^x$-a.s., $J(x, R)=f(x)$. Hence, 
\begin{equation}\label{A^r in I}
R^r\subseteq I(R).
\end{equation}
\end{Remark}

Borel measurability of $R^r$ and $R^*$ can be established under Assumption~\ref{asm:reference measure}. 

\begin{Corollary}
Assume Assumption~\ref{asm:reference measure} and that \eqref{assum.bound} holds for all $x\in\R^d$. Then, for any $R\in\B$, $R^r\in\B$ and thus $R^*\in\B$.  
\end{Corollary}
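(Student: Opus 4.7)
The plan is to prove $R^r \in \B$ by showing that $x \mapsto \P^x[\rho_R = 0]$ is Borel measurable; once this is in hand, the remark following Definition~\ref{def.regular} tells us that this probability takes values only in $\{0,1\}$, so $R^r$ is the preimage of $\{1\}$ under a Borel function, hence Borel, and then $R^* = R \cup R^r \in \B$ follows at once.

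First I would invoke Lemma~\ref{lemma.hypothesis.l} (which uses Assumption~\ref{asm:reference measure}) to obtain a nondecreasing sequence of compact sets $K_n \subseteq R$ with $\rho_{K_n} \downarrow \rho_R$ $\P^x$-a.s.\ for every $x\in\R^d$ (monotonicity of $\rho_{K_n}$ is automatic from $K_n \uparrow$). For each compact $K_n$, $\rho_{K_n}$ is an $\mathbb F^X$-stopping time by \cite[Problem I.2.7]{karatzas1998brownian}, so $\{\rho_{K_n} < 1/k\} \in \Fc^X_{1/k} \subseteq \Fc^X_\infty$, and thus $x \mapsto \P^x[\rho_{K_n} < 1/k]$ is Borel measurable by \cite[Theorem I.3.6]{blumenthal2007markov}.

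Next I would use the monotonicity $\rho_{K_n} \downarrow \rho_R$ to write, $\P^x$-a.s.,
\[
\{\rho_R = 0\} = \Big\{\inf_{n\ge 1}\rho_{K_n} = 0\Big\} = \bigcap_{k=1}^\infty \bigcup_{n=1}^\infty \{\rho_{K_n} < 1/k\},
\]
and take $\P^x$-probabilities to obtain
\[
\P^x[\rho_R = 0] = \lim_{k\to\infty}\lim_{n\to\infty} \P^x[\rho_{K_n} < 1/k],
\]
which is Borel measurable in $x$ as a pointwise iterated limit of Borel functions. The preimage of $\{1\}$ under this map is exactly $R^r$, so $R^r \in \B$, and $R^* = R \cup R^r \in \B$.

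No step looks particularly difficult, but the one point requiring a bit of care is the set-theoretic identity for $\{\rho_R = 0\}$: since $\rho_R$ itself need not be an $\mathbb F^X$-stopping time, I cannot manipulate $\{\rho_R < 1/k\}$ directly, and must instead rely on the almost-sure identity coming from the monotone approximation $\rho_{K_n} \downarrow \rho_R$. This approximation is exactly what Lemma~\ref{lemma.hypothesis.l}, and hence Assumption~\ref{asm:reference measure}, is providing for us.
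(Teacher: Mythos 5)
Your proof is correct and follows the same underlying strategy as the paper: pass through the compact approximation of Lemma~\ref{lemma.hypothesis.l}, use Borel measurability of $x\mapsto\E^x[H]$ for $\Fc^X_\infty$-measurable $H$ via \cite[Theorem I.3.6]{blumenthal2007markov}, and take a limit. The only substantive difference is the choice of auxiliary functional: you work with $x\mapsto\P^x[\rho_R=0]$ and need the iterated limit $\lim_k\lim_n\P^x[\rho_{K_n}<1/k]$, whereas the paper replaces $J(\cdot,R)$ in the proof of Proposition~\ref{prop:J Borel}(ii) by $\E^x[e^{-\rho_R}]$ and identifies $R^r=\{x:\E^x[e^{-\rho_R}]=1\}$; since $\rho_{K_n}\downarrow\rho_R$ a.s., a single application of dominated convergence gives $\E^x[e^{-\rho_{K_n}}]\to\E^x[e^{-\rho_R}]$ with no iterated limit. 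Both are correct; the paper's is marginally cleaner, and neither actually uses the integrability condition \eqref{assum.bound} (it is inherited from the hypotheses of the surrounding results but plays no role here, since the relevant integrands are bounded by $1$).
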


\begin{proof}
For any  $R\in \B$, by the same arguments as in Proposition~\ref{prop:J Borel} (ii), with $J(x,R)$ replaced by $\E^x[e^{-\rho_R}]$, we can show that $x\mapsto \E^x[e^{-\rho_R}]$ is Borel measurable. Thus, $R^r=\{x\in \R^d: \E^x[e^{-\rho_R}]=1\}\in \B$. It follows that $R^*=R\cup R^r\in\B$.  
\end{proof}

A key observation is that first hitting times to $R$ and to $R^*$ must coincide. 

\begin{Lemma}\label{lemma.equiv}
For any $R\in \B$, $\rho_R = \rho_{R^*}$ $\P^x${-a.s.} for all $x \in \R^d$. 
Hence,
\begin{align}
&\hspace{0.6in}J(x, R)=J(x,R^*)\quad \forall x\in \R^d,\label{lemma.finely2} \\  
&S(R)=S(R^*),\ \ I(R)=I(R^*),\ \ C(R)=C(R^*).\label{lemma.finely2'}
\end{align}
\end{Lemma}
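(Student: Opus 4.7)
My plan is to reduce everything to the single a.s.\ identity $\rho_R = \rho_{R^*}$ and deduce the rest mechanically. Once $\rho_R = \rho_{R^*}$ $\P^x$-a.s.\ is established, the continuity of sample paths (together with the usual convention on $\{\rho_R = \infty\}$) gives $X_{\rho_R} = X_{\rho_{R^*}}$ $\P^x$-a.s., so taking expectations in \eqref{J} yields \eqref{lemma.finely2}. Because $S(R), I(R), C(R)$ in \eqref{S, I, C} are defined purely through the function $x\mapsto J(x,R)$, the three set equalities in \eqref{lemma.finely2'} follow at once.

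To prove $\rho_R = \rho_{R^*}$ $\P^x$-a.s., the inequality $\rho_{R^*}\le \rho_R$ is immediate from $R\subseteq R^*$. For the reverse, I would set $\tau := \rho_{R^*}$ and apply the strong Markov property at $\tau$. By Remark~\ref{rem:fine topology}, $R^*$ is finely closed (i.e.\ $(R^*)^* = R^*$), so Remark~\ref{rem:X_rhoA in A^*} applied to $R^*$ gives $X_\tau \in R^* = R\cup R^r$ $\P^x$-a.s.\ on $\{\tau<\infty\}$. On the event $\{X_\tau \in R^r\}$, the strong Markov property gives
\[
\P^x\bigl(\rho_R\circ\theta_\tau = 0 \mid \Fc_\tau\bigr) = \P^{X_\tau}(\rho_R = 0) = 1,
\]
so $\P^x$-a.s.\ on this event there exist $s_n\downarrow 0$ with $X_{\tau+s_n}\in R$; since $\tau+s_n>0$, this yields $\rho_R\le \tau+s_n$ and hence $\rho_R\le \tau$. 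On $\{X_\tau\in R\}\cap\{\tau>0\}$, $\rho_R\le\tau$ is immediate from the definition of $\rho_R$. Combined with $\rho_R\ge \tau$, this gives $\rho_R = \tau$ on each of these events.

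The one remaining scenario is $\tau=0$ with $x\in R\setminus R^r$, which by Blumenthal's 0-1 law amounts to $x\in (R^*)^r\cap(R\setminus R^r)$; this is where I expect the real work to be. I would rule it out by contradiction: $x\in R\setminus R^r$ forces $\rho_R>0$ $\P^x$-a.s., hence $X_t\notin R$ on a random interval $(0,\epsilon(\omega))$, while $\tau=0$ $\P^x$-a.s.\ forces $X_t\in R^*$ for $t$ arbitrarily close to $0$; together these imply that $X_t\in R^r\setminus R$ for some $t>0$ arbitrarily small. Applying the ordinary Markov property at rational times $q\downarrow 0$ with $X_q\in R^r$ (where $\P^{X_q}(\rho_R = 0)=1$) produces points of $R$ at times arbitrarily close to $0$, contradicting $\rho_R>0$ $\P^x$-a.s.

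The main obstacle is precisely this edge case: strictly speaking, the hits of $R^*$ accumulating at $0^+$ are guaranteed only along a \emph{random} sequence, and the argument must convert this into hits of $R^r$ along a countable deterministic family of times (rationals) so that the ordinary Markov property can be invoked. This hinges crucially on the fine-closedness of $R^*$ (to know $X_\tau\in R^*$) and on Blumenthal's zero-one law (to dichotomize $\P^x(\rho_R = 0)$ and $\P^x(\tau = 0)$). Outside this subtle point, the lemma is a routine application of the strong Markov property.
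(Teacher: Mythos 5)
Your argument follows the same route as the paper: the easy inequality $\rho_{R^*}\le\rho_R$ from $R\subseteq R^*$, then, for the reverse, the observation that $X_{\rho_{R^*}}\in (R^*)^*=R^*$ a.s., and the strong Markov property to turn "$X_{\rho_{R^*}}\in R^r$" into hits of $R$ accumulating at $\rho_{R^*}$. Where you go beyond the paper is in noticing that this reasoning degenerates when $\rho_{R^*}=0$ with $x\in R\setminus R^r$: in that case $X_{\rho_{R^*}}=x$ may lie in $R$, so the paper's step "$X_{\rho_{R^*}}(\omega)\in R^r\setminus R$" does not follow, and the paper's proof is indeed silent on this configuration. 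You are right to flag this.

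The fix you propose, however, does not quite close the gap, and you honestly say so: the hits of $R^*$ accumulating at $0^+$ are along a \emph{random} sequence, and $R^r$ need not be open, so there is no reason any rational time carries $X$ into $R^r$. The ordinary Markov property at fixed rationals therefore does not apply. The standard repair replaces rationals with hitting-type stopping times: set $\sigma_n:=\inf\{t\ge 1/n:\,X_t\in R^*\}$. By the Markov property at time $1/n$ together with Remarks~\ref{rem:fine topology}--\ref{rem:X_rhoA in A^*}, $X_{\sigma_n}\in R^*=R\cup R^r$ $\P^x$-a.s.\ on $\{\sigma_n<\infty\}$, and on $\{\rho_{R^*}=0\}$ one has $\sigma_n\downarrow 0$. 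At each $\sigma_n$, either $X_{\sigma_n}\in R$ (so $\rho_R\le\sigma_n$), or $X_{\sigma_n}\in R^r\setminus R$ and the strong Markov property gives $\rho_R\circ\theta_{\sigma_n}=0$, hence again $\rho_R\le\sigma_n$. Letting $n\to\infty$ yields $\rho_R=0$ on $\{\rho_{R^*}=0\}$, which rules out $x\in R\setminus R^r$. Alternatively one may simply invoke the classical potential-theoretic fact that $R$ and its fine closure have the same hitting time (or, equivalently, that $(R^*)^r\subseteq R^r$), which is what the paper implicitly relies on.
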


\begin{proof}
Fix $R\in\B$ and $x\in\R^d$. Since $R\subseteq R^*$, $\rho_{R^*}\leq \rho_R$. Assume $\rho_{R^*}<\infty$, otherwise $\rho_{R^*}=\rho_R$ trivially holds. By contradiction, assume that there exists $\omega\in\Omega$ such that 
\begin{equation}\label{lemma_eq1}
\rho_{R^*}(\omega)< \rho_R(\omega).
\end{equation}
By \eqref{X_rhoA in A^*} and \eqref{A^**=A^*}, $X_{\rho_{R^*}}(\omega)\in (R^*)^*=R^*$. Then, \eqref{lemma_eq1} entails $X_{\rho_{R^*}}(\omega)\in R^r\setminus R$. 
This in turn implies the existence of $(t_n(\omega))_{n\in\N}$ in $\R_+$ such that $t_n>\rho_{R^*}$, $X_{t_n}\in R$, and $t_n\downarrow \rho_{R^*}$. It follows that $\rho_R=\lim_{n\to\infty} t_n=\rho_{R^*}$, a contradiction to \eqref{lemma_eq1}. 
With $\rho_R = \rho_{R^*}$ $\P^x${-a.s.} for all $x \in \R^d$, \eqref{lemma.finely2} and \eqref{lemma.finely2'} directly follow from \eqref{J} and \eqref{S, I, C}.
\end{proof}

\begin{Proposition}\label{prop.equiv.finely}
For any $R\in\B$,  
$R\in\Ec$ if and only if $R^*\in\Ec$. 
Moreover, if $R\in\Ec$, then any $T\in \B$ with $T^*=R^*$ belongs to $\Ec$ and satisfies
$J(x, T)=J(x, R)=J(x, R^*)$ for all $x\in \R^d.$
\end{Proposition}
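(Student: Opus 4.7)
The plan is to leverage Lemma~\ref{lemma.equiv}, which already gives $J(\cdot,R)=J(\cdot,R^*)$, $S(R)=S(R^*)$, $I(R)=I(R^*)$, and $C(R)=C(R^*)$. Combined with Remark~\ref{rem:A^r in I} (i.e., $R^r\subseteq I(R)$) and the decomposition $R^*=R\cup R^r$, this lets me simplify
\[
\Theta(R^*)=S(R^*)\cup(I(R^*)\cap R^*)=S(R)\cup(I(R)\cap R)\cup(I(R)\cap R^r)=S(R)\cup(I(R)\cap R)\cup R^r.
\]
Once the equivalence $R\in\Ec\iff R^*\in\Ec$ is in hand, the ``moreover'' part is immediate: for any $T\in\B$ with $T^*=R^*$, Lemma~\ref{lemma.equiv} gives $J(\cdot,T)=J(\cdot,T^*)=J(\cdot,R^*)=J(\cdot,R)$, and $T^*=R^*\in\Ec$ combined with the equivalence applied to $T$ in place of $R$ yields $T\in\Ec$.

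For the forward direction of the equivalence, if $R\in\Ec$ then $S(R)\cup(I(R)\cap R)=R$, so the simplification above collapses to $\Theta(R^*)=R\cup R^r=R^*$, whence $R^*\in\Ec$.

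The reverse direction carries the real content. Assume $R^*\in\Ec$, i.e., $R^*=S(R)\cup(I(R)\cap R^*)$. This forces $S(R)\subseteq R^*=R\cup R^r$, and the key observation is that $S(R)$ cannot meet $R^r$: by the definitions in \eqref{S, I, C} one has $S(R)\cap I(R)=\emptyset$, and $R^r\subseteq I(R)$ by Remark~\ref{rem:A^r in I}, so $S(R)\cap R^r=\emptyset$. Combining these yields the crucial inclusion $S(R)\subseteq R$, from which $\Theta(R)=S(R)\cup(I(R)\cap R)\subseteq R$ follows trivially. For the reverse inclusion, take any $x\in R\subseteq R^*=S(R)\cup(I(R)\cap R^*)$: either $x\in S(R)\subseteq\Theta(R)$, or $x\in I(R)\cap R^*$, in which case $x\in R$ forces $x\in I(R)\cap R\subseteq\Theta(R)$. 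Hence $R=\Theta(R)$.

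The main obstacle is exactly this reverse direction: it hinges on the non-obvious inclusion $S(R)\subseteq R$, which cannot be taken for granted for arbitrary $R\in\B$, and is extracted only by exploiting simultaneously the equilibrium identity for $R^*$, the disjointness of $S(R)$ and $I(R)$, and the containment $R^r\subseteq I(R)$.
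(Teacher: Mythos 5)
Your proof is correct and uses essentially the same ingredients as the paper's: Lemma~\ref{lemma.equiv} to identify $S(R^*)=S(R)$, $I(R^*)=I(R)$, Remark~\ref{rem:A^r in I} for $R^r\subseteq I(R)$, and the disjointness $S(R)\cap I(R)=\emptyset$ to pin $S(R)$ inside $R$; the paper packages the reverse direction as a disjoint-union cancellation of $R^r\setminus R$ from both sides of the equilibrium identity, whereas you first extract $S(R)\subseteq R$ and then run a two-sided inclusion, but these are the same argument in different clothing.
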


\begin{proof}
Fix $R\in\B$. Suppose $R\in\Ec$, i.e. 
\begin{equation}\label{A equilibrium}
R= \Theta(R) =S(R)\cup(I(R)\cap R). 
\end{equation}
By \eqref{lemma.finely2'},  
\begin{equation}
\begin{aligned}
\Theta(R^*)= S(R^*)\cup(I(R^*)\cap R^*)&=S(R)\cup(I(R)\cap(R\cup R^r))\\
&=S(R)\cup(I(R)\cap R)\cup(I(R)\cap R^r)\\
&=R\cup R^r=R^*,
\end{aligned}
\end{equation}
where the third equality follows from \eqref{A equilibrium} and \eqref{rem:A^r in I}. This shows that $R^*$ also belongs to $\Ec$. Conversely, suppose $R^*\in\Ec$. Then \eqref{A equilibrium} holds with $R$ replaced by $R^*$, i.e. $R^*=S(R^*)\cup(I(R^*)\cap R^*)$. This can be rewritten, using \eqref{lemma.finely2'} and $R^*=R\cup R^r = R\cup (R^r\setminus R)$, as
\begin{align}\label{1_5}
R\cup(R^r\backslash R) &= S(R)\cup(I(R)\cap \left(R\cup(R^r \backslash R)\right))\notag\\
&=S(R)\cup \left( I(R)\cap R \right)\cup \left( I(R)\cap (R^r\backslash R) \right)\notag\\
&= S(R)\cup \left( I(R)\cap R\right)\cup (R^r\backslash R), 
\end{align}
where the last equality follows from \eqref{rem:A^r in I}. Note that \eqref{rem:A^r in I} also implies $S(R)\cap (R^r\backslash R)\subseteq S(R)\cap I(R)=\emptyset$. Hence, in \eqref{1_5}, the left hand side is a disjoint union of $R$ and $R^r\setminus R$, and the right hand side is a disjoint union of $S(R)\cup \left( I(R)\cap R\right)$ and $R^r\setminus R$. We then conclude from \eqref{1_5} that $R = S(R)\cup \left( I(R)\cap R \right)=\Theta(R)$, i.e. $R\in\Ec$. 

Now, if $R\in\Ec$, by part (i) $R^*\in\Ec$. For any $S\in\B$ with $S^*=R^*$, as $S^*=R^*\in\Ec$, we have $S\in\Ec$ (by part (a) again). Then, Lemma \ref{lemma.equiv} directly gives $J(x, S)=J(x,S^*)=J(x, R^*)=J(x, R)$, for all $x\in\R^d$.
\end{proof}

In view of Proposition~\ref{prop.equiv.finely}, to find an optimal equilibrium, it suffices to restrict our attention to finely closed stopping policies. After all, the fine closure of $R\in\Ec$ remains an equilibrium, with the same values. In fact, as Lemma~\ref{lemma.equiv} indicates, $R$ and $R^*$ induce the same stopping behavior, with $S(R)$, $I(R)$, and $C(R)$ in \eqref{S, I, C} staying intact after we take the fine closure of $R$.


\subsection{Euclidean Closures versus Fine Closures}

\begin{Definition}[\cite{blumenthal2007markov}, Definition II.3.1]\label{def.polar}
Given $R\in \B$, we say that $R$ is polar if $\rho_R=\infty$ $\P^x$-a.s. for all $x\in \R^d$, that $R$ is thin if $R^r=\emptyset$, and that $R$ is semipolar if it is a countable union of thin sets.
\end{Definition}

Instead of dealing with all stopping policies $R\in\B$, we focus on those such that 
\begin{equation}\label{difference polar}
\overline{R}\setminus R^*\ \text{is semipolar},
\end{equation}
where $\overline R$ denotes the (Euclidean) closure of $R$. 

\begin{Remark}\label{rem:include closed sets}
\eqref{difference polar} 
covers all closed subsets of $\R^d$. Indeed, if $R\in\B$ is closed, $\overline{R}\setminus R^* = R\setminus R=\emptyset$ is trivially semipolar. 
\end{Remark}

\begin{Remark}\label{rem:exclude Q}
\eqref{difference polar} excludes some pathological sets that are so small that $X$ will never reach, but so dense that their closures are immensely larger and will be hit by $X$ with positive probability. For instance, if $X$ is a $d$-dimensional Brownian motion with $d\ge 2$, then 
\[
Q:=\{x=(x_1,x_2,...,x_d)\in\R^d : x_i\in\Q,\ \text{$i=1,2,...,d$}\}
\]
is polar, but $\overline Q=\R^d$. Note that \eqref{difference polar} excludes $Q$. Since $Q$ is polar, $Q^r = \emptyset$ and thus $Q^* = Q\cup Q^r = Q$ is polar. Then, $\overline Q\setminus Q^* = \R^d\setminus Q$ will be hit by $X$ continuously over time, and is therefore not semipolar (in view of \cite[Proposition II.3.4]{blumenthal2007markov}). 

In practice, one does not usually take into account a stopping policy like $Q$, but simply consider $\emptyset$ (giving the same effect ``never stop'' as $Q$) or $\overline Q=\R^d$ (``stop immediately''). 
\end{Remark}

\begin{Remark}
In the one-dimensional case (i.e. $d=1$), \eqref{1-d condition} ensures $R^*=\overline R$ for all $R\in\B$, so that \eqref{difference polar} is trivially satisfied for all $R\in\B$. Hence, \eqref{difference polar} covers the one-dimensional setup in 
 \cite{huang2017optimal}, and can be viewed as the multi-dimensional counterpart of \eqref{1-d condition}. 
 \end{Remark}

Combining the focus on finely closed stopping policies, stipulated at the end of Section~\ref{subsec:finely closed}, with the additional requirement \eqref{difference polar}, we end up with the following collection of stopping policies:
\begin{equation}\label{Lc}
\Lc:=\{R\in \B: R=R^*\ \hbox{and}\ \overline{R}\setminus R\ \text{is semipolar}\}.
\end{equation}

\begin{Lemma}\label{lemma.set.0}
$\Lc$ contains all closed subsets of $\R^d$, and is closed under finite unions and countable intersections. 
\end{Lemma}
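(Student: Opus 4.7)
The statement has three assertions, all of which I expect to reduce to bookkeeping with the definitions, provided one is careful to verify the fine-closure condition together with the semipolar-difference condition.

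For the first assertion, suppose $R\subseteq\R^d$ is closed. Then $\overline R\setminus R=\emptyset$ is trivially semipolar, so the only nontrivial point is $R=R^*$, i.e.\ $R^r\subseteq R$. If $x\in R^r$ then $\rho_R=0$ $\P^x$-a.s., so there exist $t_n\downarrow 0$ with $X_{t_n}\in R$; by continuity of paths $X_{t_n}\to X_0=x$, and closedness of $R$ gives $x\in R$. Hence $R\in\Lc$.

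For finite unions, it suffices to treat two sets $R_1,R_2\in\Lc$. The plan is to show directly that $(R_1\cup R_2)^r=R_1^r\cup R_2^r$. The inclusion $\supseteq$ is clear from $\rho_{R_1\cup R_2}\le\rho_{R_i}$. For the converse, if $x\notin R_1^r\cup R_2^r$, then Blumenthal's zero–one law gives $\P^x(\rho_{R_i}>0)=1$ for $i=1,2$, so $\P^x(\rho_{R_1\cup R_2}>0)=\P^x(\min\{\rho_{R_1},\rho_{R_2}\}>0)=1$, whence $x\notin(R_1\cup R_2)^r$. Combined with $R_i=R_i^*=R_i\cup R_i^r$, this gives $(R_1\cup R_2)^*=R_1^*\cup R_2^*=R_1\cup R_2$, so $R_1\cup R_2$ is finely closed. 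For the Euclidean part, $\overline{R_1\cup R_2}=\overline{R_1}\cup\overline{R_2}$, so
\[
\overline{R_1\cup R_2}\setminus(R_1\cup R_2)\subseteq(\overline{R_1}\setminus R_1)\cup(\overline{R_2}\setminus R_2),
\]
a finite union of semipolar sets, which is semipolar (each being a countable union of thin sets).

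For countable intersections, let $\{R_n\}_{n\in\N}\subseteq\Lc$ and set $R:=\bigcap_n R_n\in\B$. To show $R^r\subseteq R$, fix $x\in R^r$, so $\rho_R=0$ $\P^x$-a.s. For each $n$, $\rho_{R_n}\le\rho_R=0$ $\P^x$-a.s., so $x\in R_n^r\subseteq R_n^*=R_n$; hence $x\in R$. Thus $R=R^*$. For the Euclidean condition, $\overline R\subseteq\bigcap_n\overline{R_n}$, and
\[
\overline R\setminus R\;\subseteq\;\Big(\bigcap_n\overline{R_n}\Big)\setminus\Big(\bigcap_n R_n\Big)\;\subseteq\;\bigcup_{n\in\N}\bigl(\overline{R_n}\setminus R_n\bigr),
\]
since any point in the middle set belongs to some $\overline{R_n}$ but not to the corresponding $R_n$. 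The right-hand side is a countable union of semipolar sets, hence itself semipolar, completing the proof.

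I do not foresee a real obstacle; the only place one needs to pause is verifying that fine closure is compatible with finite unions (handled above via Blumenthal's zero–one law) and with countable intersections (handled by the monotonicity $\rho_{R_n}\le\rho_R$), after which the semipolar bookkeeping is routine.
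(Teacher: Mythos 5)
Your proposal is correct and essentially mirrors the paper's argument: the key steps are (a) $\overline{R_1\cup R_2}=\overline{R_1}\cup\overline{R_2}$ gives the semipolar estimate for unions, and (b) comparing $\big(\bigcap_n\overline{R_n}\big)\setminus\big(\bigcap_n R_n\big)$ pointwise gives the inclusion into $\bigcup_n(\overline{R_n}\setminus R_n)$ for intersections, exactly as in the paper. The only stylistic difference is that you verify the fine-closedness claims by hand — path continuity for closed sets, Blumenthal's zero--one law plus $\rho_{R_1\cup R_2}=\rho_{R_1}\wedge\rho_{R_2}$ for finite unions, and monotonicity of hitting times for intersections — whereas the paper disposes of those by appealing to the fact that the fine closure is a genuine topological closure operation (Remark~\ref{rem:fine topology}), under which finite unions and arbitrary intersections of closed sets are automatically closed. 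Your explicit route is slightly longer but buys nothing extra in generality; both are perfectly sound. One small point in your favor: the paper's proof considers a nonincreasing sequence $(R_n)$, but its argument never actually uses monotonicity, and your version handles an arbitrary countable family directly, which matches the lemma's wording more faithfully.
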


\begin{proof}
The first assertion simply follows from Remark~\ref{rem:include closed sets}. 
For any $R, T\in \Lc$, using the fact that $\overline{R\cup T}= \overline R\cup \overline T$, we get 
\bee
\overline{R\cup T}\setminus (R\cup T)=(\overline R\cup \overline T)\setminus (R\cup T)\subseteq (\overline R\setminus R)\cup (\overline T\setminus T).
\eee
As $\overline R\setminus R$ and $\overline T\setminus T$ are both semipolar, $\overline{R\cup T}\setminus (R\cup T)$ is semipolar, i.e. $R\cup T\in \Lc$. On the other hand, given any nonincreasing sequence $(R_n)_{n\in \N}$ in $\Lc$, set $R:=\bigcap_n R_n$. In view of Remark~\ref{rem:fine topology}, since $R_n$ is finely closed for all $n\in\N$, their intersection $R$ is also finely closed. Moreover, since
$\overline{R}\subseteq \bigcap_n \overline{R_n}$, 
\begin{equation*}
\overline{R}\setminus R= \overline{R}\setminus \bigg(\bigcap_{n} R_n\bigg) \subseteq  \bigg(\bigcap_n \overline{R_n}\bigg)\setminus \bigg(\bigcap_n R_n\bigg).
\end{equation*}
Given any point $x\in \big(\bigcap_n \overline{R_n}\big)\setminus (\bigcap_n R_n)$, $x$ is contained in every $\overline{R_n}$, and there exists $n_0\in\N$ such that $x\notin R_{n_0}$; hence, $x\in \overline{R_{n_0}}\setminus R_{n_0}$. The above inclusion relation therefore implies 
\begin{equation}\label{intersect polar}
\overline{R}\setminus R \subseteq  \bigcup_n (\overline{R_n}\setminus R_n).
\end{equation}
Since $\overline{R_n}\setminus R_n$ is semipolar for all $n\in \N$, the right hand side above, as a countable union of semipolar sets, is semipolar. Thus, $\overline{R}\setminus R$ is also semipolar, so that we can conclude $R\in \Lc$.
\end{proof}


Based on the development in this section, the appropriate subset of $\Ec$ we will focus on is 
\begin{equation}\label{tEc}
\widetilde\Ec := \Ec \cap \mathcal L.
\end{equation}


\section{Existence of an Optimal Equilibrium among $\tEc$}\label{sec:optimalE}
In this section, we set out to find an optimal equilibrium among $\tEc$ defined in \eqref{tEc}. We will first introduce a main assumption and its ramifications in Section~\ref{subsec:Hunt's}, and develop a machinery to improve equilibria in $\tEc$ in Section~\ref{subsec:improve}. The main result will be presented in Theorem~\ref{thm.optimal}.

\subsection{Hunt's hypothesis}\label{subsec:Hunt's}
By Definition~\ref{def.polar}, a polar set is clearly semipolar. The converse is the celebrated Hunt hypothesis, which is the second main assumption of this paper.

\begin{Assumption}[Hunt's hypothesis]\label{assum.hunt}
If $R\in\B$ is semipolar, then it is polar.
\end{Assumption}

Finding conditions which guarantee that a Markov process satisfies Assumption \ref{assum.hunt} is a classical topic in probabilistic potential theory. 
It is well-known that a $d$-dimensional Brownian motion satisfies Assumption \ref{assum.hunt} for all $d\in\N$. As a result, a large class of It\^{o} diffusions, given by \eqref{diffusion}, also satisfies Assumption \ref{assum.hunt}, as long as the Dol\'{e}ans-Dade exponential of $t\mapsto \int_0^t {b(X_s)}{\sigma^{-1}(X_s)}ds$ is a martingale, thanks to Girsanov's theorem;  
see e.g. \cite[Section 9.2]{oksendal2003stochastic}. 
For general Markov processes, we refer readers to \cite[Section 5.2]{chung2006markov} for a set of theoretic criteria that ensure Assumption \ref{assum.hunt}.

\begin{Remark}\label{rem:A-A^r polar}
For any $R\in \B$, $R\setminus R^r$ is semipolar; see Theorem 6 in \cite[Section 3.5]{chung2006markov}. Hence, under Assumption \ref{assum.hunt}, $R\setminus R^r$ is polar.
\end{Remark}


Assumption \ref{assum.hunt} leads to a very useful result in finding equilibria: if $R\in\B$ becomes smaller after we apply $\Theta$ to $R$ once, we immediately obtain an equilibrium. 

\begin{Proposition}\label{prop.decrease.theta}
Suppose Assumptions \ref{asm:reference measure} and \ref{assum.hunt} hold. Then, for any $R\in \B$ with $\Theta(R)\subseteq R$, $R\setminus \Theta(R)$ is polar and 
\[
\Theta^2(R)=\Theta(R),\quad \hbox{i.e.}\quad  \Theta(R)\in\Ec. 
\]
In addition, if $R$ is finely closed, so is $\Theta(R)$.
\end{Proposition}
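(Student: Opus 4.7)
The plan is to derive the statement from three main ingredients: a set-algebraic identification of $R\setminus \Theta(R)$, a use of probabilistic potential theory (specifically Remark~\ref{rem:A-A^r polar} together with Hunt's hypothesis), and then the ``Markovian decomposition'' $\rho_R=\min(\rho_{\Theta(R)},\rho_{R\setminus \Theta(R)})$.

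First I would isolate the set $R\setminus \Theta(R)$. Since $S(R)$, $I(R)$, $C(R)$ partition $\R^d$ and $\Theta(R)=S(R)\cup(I(R)\cap R)$, a direct manipulation yields
\bee
R\setminus\Theta(R)=R\cap C(R).
\eee
Next I would show this set lies in $R\setminus R^r$. Indeed, if $x\in R^r$, then $\rho_R=0$ $\P^x$-a.s., forcing $J(x,R)=f(x)$ and hence $x\in I(R)$; therefore no point of $R\cap C(R)$ is regular to $R$. By Remark~\ref{rem:A-A^r polar}, $R\setminus R^r$ is semipolar, so its Borel subset $R\cap C(R)$ (Borel by Proposition~\ref{prop:J Borel}(ii)) is semipolar too, and Assumption~\ref{assum.hunt} upgrades this to polar. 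This establishes the first assertion.

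Armed with polarity of $R\setminus \Theta(R)$, I would observe that for every $x\in\R^d$,
\bee
\rho_R=\min(\rho_{\Theta(R)},\rho_{R\setminus\Theta(R)})=\rho_{\Theta(R)}\quad \P^x\text{-a.s.},
\eee
since $\rho_{R\setminus\Theta(R)}=\infty$ $\P^x$-a.s. by polarity. Hence $J(\cdot,R)=J(\cdot,\Theta(R))$ pointwise, so $S(R)=S(\Theta(R))$ and $I(R)=I(\Theta(R))$. Writing
\bee
\Theta^2(R)=S(\Theta(R))\cup(I(\Theta(R))\cap\Theta(R))=S(R)\cup(I(R)\cap\Theta(R)),
\eee
the inclusion $\Theta^2(R)\subseteq\Theta(R)$ is immediate from $\Theta(R)\subseteq R$; the reverse inclusion follows because any $y\in I(R)\cap R$ belongs to $\Theta(R)$ by definition, hence to $I(R)\cap\Theta(R)\subseteq\Theta^2(R)$. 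This gives $\Theta^2(R)=\Theta(R)$.

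Finally, for the fine-closedness claim, suppose $R$ is finely closed and take $x\in\Theta(R)^r$. Then $\rho_{\Theta(R)}=0$ $\P^x$-a.s., and by the equality above $\rho_R=0$ $\P^x$-a.s., so $x\in R^r=R$. The same equality also gives $J(x,R)=J(x,\Theta(R))=f(x)$, placing $x$ in $I(R)\cap R\subseteq\Theta(R)$. Thus $\Theta(R)^r\subseteq\Theta(R)$, i.e., $\Theta(R)$ is finely closed. The only nontrivial obstacle I anticipate is the identification $R\setminus\Theta(R)=R\cap C(R)\subseteq R\setminus R^r$, which is what opens the door to Hunt's hypothesis; once polarity is in hand, the rest of the proof is essentially bookkeeping with $J(\cdot,R)=J(\cdot,\Theta(R))$.
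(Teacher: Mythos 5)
Your proposal is correct and follows essentially the same route as the paper: the key step in both cases is to identify $R\setminus\Theta(R)$ as a set of irregular points of $R$, invoke $R\setminus R^r$ being semipolar together with Hunt's hypothesis to get polarity, and then deduce $\rho_R=\rho_{\Theta(R)}$, $J(\cdot,R)=J(\cdot,\Theta(R))$, and $\Theta^2(R)=\Theta(R)$. Your explicit identification $R\setminus\Theta(R)=R\cap C(R)$ and your two-inclusion verification of $\Theta^2(R)=\Theta(R)$ are cosmetic variants of the paper's inclusion chain $R^r\cap R\subseteq I(R)\cap R\subseteq\Theta(R)\subseteq R$ and direct set computation, and the fine-closedness argument via $\Theta(R)^r\subseteq R^r\subseteq\Theta(R)$ matches as well.
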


\begin{proof}
For any $R\in\B$, by \eqref{A^r in I}, \eqref{Theta}, and $\Theta(R)\subseteq R$, we have 
\begin{equation}\label{eq.decrease.theta}
R^r\cap R\subseteq I(R)\cap R\subseteq \Theta(R)\subseteq R.
\end{equation}
As $R\setminus R^r$ is polar (Remark~\ref{rem:A-A^r polar}), this implies $R\setminus \Theta(R)$ is also polar. It follows that 
$\rho_{\Theta(R)}=\rho_R$ $\P^x$-a.s. for all $x\in \R^d$, which in turn implies $J(x, \Theta(R))=J(x, R)$  for all $x\in \R^d$. In view of \eqref{S, I, C}, we obtain $S(R)=S(\Theta(R))$ and $I(R)=I(\Theta(R))$. Hence,
\begin{align*}
\Theta^2(R) &= S(\Theta(R))\cup (I(\Theta(R))\cap \Theta(R)) \\
&= S(R) \cup (I(R)\cap \Theta(R))\\
&=S(R) \cup (I(R)\cap (S(R) \cup (I(R)\cap R) ))\\
&= S(R) \cup (I(R)\cap R) = \Theta(R),
\end{align*}
where the first, third, and fifth equalities follow from \eqref{Theta} and the fourth equality is due to $S(R)\cap I(R)=\emptyset$ by definition. This shows that $\Theta(R)\in\Ec$. Finally, as $\Theta(R)\subseteq R$, we have $(\Theta(R))^r\subseteq R^r$. If $R$ is additionally finely closed, i.e. $R^r\subseteq R$, then \eqref{eq.decrease.theta} yields $R^r\subseteq \Theta(R)$. It follows that $(\Theta(R))^r\subseteq R^r\subseteq \Theta(R)$, i.e. $\Theta(R)$ is finely closed. 
\end{proof}

\begin{Remark}
Under $\Theta(R)\subseteq R$, $R$ and $\Theta(R)$ can only differ in fairly limited ways. In view of \eqref{A^r in I}, every $x\in R\setminus \Theta(R)$ has to be a boundary point of $R$ that is not regular to $R$. 
Applying $\Theta$ to $R$ in this case removes those irregular boundary points, leading to the equilibrium $\Theta(R)$; see Remark~\ref{rem:by-product} for a concrete demonstration of this. 
\end{Remark}

\begin{Remark}\label{rem:convergence}
Proposition~\ref{prop.decrease.theta} enhances the convergence of the fixed-point iteration introduced in \cite{huang2018time}. When stated in the current context, \cite[Proposition 3.3]{huang2018time} asserts that whenever $R\subseteq \Theta(R)$, 
\[
\lim_{n\to\infty}\Theta^n(R) = \bigcup_{n\in\N} \Theta^n(R)
\] 
is well-defined and is an equilibrium. Proposition~\ref{prop.decrease.theta} complements the above result: for the opposite case $\Theta(R)\subseteq R$, $\lim_{n\to\infty}\Theta^n(R) = \Theta(R)$ is an equilibrium. 
\end{Remark}

\begin{Remark}\label{rem:Lc stopping}
Recall $\Lc$ in \eqref{Lc}. Under Assumption \ref{assum.hunt}, $\overline R\setminus R$ is polar for all $R\in\Lc$. Hence, for any $R\in\Lc$, $\rho_{R} = \rho_{\overline R}$ $\P^x$-a.s. for all $x\in\R^d$. 
\end{Remark}

\begin{Remark}\label{rem:to closedE}
It is tempting to conclude from Remark~\ref{rem:Lc stopping} that we can further restrict our attention from $\tEc$ to $\overline \Ec:= \{R\in\Ec: R\ \hbox{is closed}\}$; after all, the one-dimensional analysis in \cite{huang2017optimal} is entirely based on $\overline\Ec$. This is however not the case, as the relation ``$R\in\tEc$ if and only if $\overline R\in\Ec$'' does not hold in general. 
To illustrate, take any $R\in \tEc$. For $\overline R$ to be in $\Ec$, we need $f(x)\geq J(x, \overline R)$ for $x\in \overline{R}$. As $R\in\Ec$, we must have $f(x)\leq J(x, R)=J(x, \overline R)$ for $x\notin R$. Hence, ``$\overline R\in\Ec$'' boils down to the condition ``$f(x)=J(x, \overline R)$ for $x\in \overline{R}\setminus R$'', which is not true in general. From this observation, we construct an example in Section~\ref{sec:eg}, which explicitly demonstrates $R\in\tEc$ but $\overline R\notin \Ec$.  
%
\end{Remark}


\subsection{Improving an Equilibrium}\label{subsec:improve}

First, we observe that \cite[Lemma 3.1]{huang2017optimal} can be extended to the multi-dimensional case in a straightforward way.

\begin{Lemma}\label{lemma.smaller.eq}
For any $R$, $T\in \B$ with $R\subseteq T$ and $R\in\Ec$, $J(x, R)\geq J(x, T)$ for all $x\in \R^d$.
\end{Lemma}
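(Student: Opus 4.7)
The plan is to derive $J(x,R)\geq J(x,T)$ by decomposing paths at the earlier hitting time $\rho_T$ via the strong Markov property, and then combining log-subadditivity \eqref{DI} with the equilibrium inequality $J(y,R)\geq f(y)$ for $y\notin R$ (which follows from $R=\Theta(R)=S(R)\cup(I(R)\cap R)$ forcing $S(R)\subseteq R$, so that $y\notin R$ implies $y\notin S(R)$). Since $R\subseteq T$, the first hitting times obey $\rho_T\leq\rho_R$ pointwise; in particular $\rho_T=\infty$ forces $\rho_R=\infty$, so the two integrands $\delta(\rho_R)f(X_{\rho_R})$ and $\delta(\rho_T)f(X_{\rho_T})$ coincide (under the standing $\limsup$ convention) on $\{\rho_T=\infty\}$ and this event contributes equally to $J(x,R)$ and $J(x,T)$; only $\{\rho_T<\infty\}$ needs further analysis.

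Next I would split $\{\rho_T<\infty\}$ into $\{X_{\rho_T}\in R\}$ and $\{X_{\rho_T}\notin R\}$. On the former event, $\rho_T$ is itself a valid candidate in the infimum defining $\rho_R$, which together with $\rho_R\geq\rho_T$ gives $\rho_R=\rho_T$ and thus $\delta(\rho_R)f(X_{\rho_R})=\delta(\rho_T)f(X_{\rho_T})$ pointwise. On the latter event, $\rho_R>\rho_T$ and the shift identity $\rho_R-\rho_T=\rho_R\circ\theta_{\rho_T}$ is available, so \eqref{DI} gives
\[
\delta(\rho_R)f(X_{\rho_R})\;\geq\;\delta(\rho_T)\,\delta\big(\rho_R\circ\theta_{\rho_T}\big)\,f\!\left(X_{\rho_T+\rho_R\circ\theta_{\rho_T}}\right).
\]
Conditioning on $\Fc_{\rho_T}$ and applying the strong Markov property turns the conditional expectation of the right-hand side into $\delta(\rho_T)\,J(X_{\rho_T},R)$; since $X_{\rho_T}\notin R$ and $R\in\Ec$, the equilibrium inequality yields $J(X_{\rho_T},R)\geq f(X_{\rho_T})$. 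Summing the contributions of the two sub-events of $\{\rho_T<\infty\}$, and adding the $\{\rho_T=\infty\}$ tail, produces $J(x,R)\geq \E^x\!\left[\delta(\rho_T)f(X_{\rho_T})\right]=J(x,T)$.

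The principal bookkeeping point is the case analysis on $\{X_{\rho_T}\in R\}$: the naive decomposition $\rho_R=\rho_T+\rho_R\circ\theta_{\rho_T}$ can fail there, because the shifted hitting time may be strictly positive when $X_{\rho_T}$ is not regular to $R$; one sidesteps this by treating the event directly via the pointwise identity $\rho_R=\rho_T$ and the fact that $X_{\rho_R}=X_{\rho_T}$, rather than through a Markov step. Apart from this, the argument mirrors the one-dimensional proof of \cite[Lemma 3.1]{huang2017optimal}, and needs no assumption on $X$, $f$, or $\delta$ beyond the standing ones.
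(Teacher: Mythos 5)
Your argument follows essentially the same route as the paper's: split at the earlier hitting time $\rho_T$, apply log-subadditivity \eqref{DI} and the strong Markov property to reduce the integrand to $\delta(\rho_T) J(X_{\rho_T},R)$ on the event where $X$ has not yet reached $R$, and then use $R\in\Ec$ to obtain $J(X_{\rho_T},R)\ge f(X_{\rho_T})$. Your partition $\{\rho_T=\infty\}$, $\{\rho_T<\infty,\,X_{\rho_T}\in R\}$, $\{\rho_T<\infty,\,X_{\rho_T}\notin R\}$ plays the same role as the paper's split into $\{\rho_T=\rho_R\}$ versus $\{\rho_T<\rho_R\}$, so the two proofs are the same up to relabelling.

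One caveat: the pointwise identity $\rho_R=\rho_T$ you invoke on $\{X_{\rho_T}\in R\}$ needs $\rho_T>0$, since only then is $\rho_T$ a competitor in the infimum over $t>0$ defining $\rho_R$. When $\rho_T=0$ (i.e.\ $x$ is regular to $T$) while $x\in R$ is not regular to $R$, one has $\rho_T=0<\rho_R$ with $X_{\rho_T}=x\in R$, so the identity fails. The paper's own sketch has exactly the same blind spot: its assertion ``$X_{\rho_T}\notin R$ on $\{\rho_T<\rho_R\}$'' breaks down on precisely this boundary event. You correctly flagged $\{X_{\rho_T}\in R\}$ as the delicate case, but the repair ``use $\rho_R=\rho_T$ directly'' is unavailable when $\rho_T=0$. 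The offending starting points lie in $S(R)\subseteq R\setminus R^r$, a set which is polar under Hunt's hypothesis, and in the paper the lemma is applied only to pairs that differ by a polar set, so the downstream use appears unaffected; but strictly as written neither your proof nor the paper's handles this corner case.
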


\begin{proof}
Since the result follows from repeating the arguments in the proof of \cite[Lemma 3.1]{huang2017optimal}, we only sketch the proof below. As $R\subseteq T$, $\rho_T\le \rho_R$. For any $x\in\R^d$, observe that
\begin{align*}
J(x,R) &= \E^x[\delta(\rho_R) f(X_{\rho_R}) 1_{\{\rho_T < \rho_R\}}] + \E^x[\delta(\rho_R) f(X_{\rho_R}) 1_{\{\rho_T = \rho_R\}}] \\
&\ge \E^x[\delta(\rho_T)\E^x[ \delta(\rho_R-\rho_T) f(X_{\rho_R}) \mid \Fc_{\rho_T}] 1_{\{\rho_T < \rho_R\}}] + \E^x[\delta(\rho_T) f(X_{\rho_T}) 1_{\{\rho_T = \rho_R\}}] \\
&= \E^x[\delta(\rho_T)J(X_{\rho_T}, R)1_{\{\rho_T < \rho_R\}}] + \E^x[\delta(\rho_T) f(X_{\rho_T}) 1_{\{\rho_T = \rho_R\}}]\\
&\ge \E^x[\delta(\rho_T)f(X_{\rho_T})1_{\{\rho_T < \rho_R\}}] + \E^x[\delta(\rho_T) f(X_{\rho_T}) 1_{\{\rho_T = \rho_R\}}] = J(x,T),
\end{align*}
where the first inequality follows from \eqref{DI}, the second equality is due to the strong Markov property of $X$, and the second inequality stems from $R\in\Ec$ and $X_{\rho_T}\notin R$ on the set $\{\rho_T < \rho_R\}$. 
\end{proof}

The next result is a multi-dimensional extension of \cite[Proposition 4.8]{huang2017optimal}.

\begin{Proposition}\label{prop.intersection}
Assume Assumptions~\ref{asm:reference measure} and \ref{assum.hunt}, and that \eqref{assum.bound} and \eqref{old asm} hold for all $x\in\R^d$. 
Then, for any $R$, $T\in \tEc$,  
$\Theta( R\cap T)\subseteq R\cap T$ belongs to $\tEc$, and satisfies
\begin{equation}\label{eq.prop.intersection}
J(x, \Theta( R\cap T))\geq J(x, R)\vee J(x,  T), \quad \forall x \in \R^d.
\end{equation}
\end{Proposition}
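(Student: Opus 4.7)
The single hard step is to establish the value inequality
\begin{equation*}
J(x, R\cap T)\;\geq\; J(x, R) \vee J(x, T)\qquad \forall x\in\R^d. \tag{$\star$}
\end{equation*}
Granted $(\star)$, everything else follows routinely by combining it with Proposition~\ref{prop.decrease.theta} and Lemma~\ref{lemma.smaller.eq}.

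To establish $(\star)$, by symmetry it suffices to prove $J(x,R\cap T)\geq J(x,R)$. I plan an alternating-hitting argument. Set $A:=R\cap T$, $\tau_1:=\rho_R$, $\tau_{2k}:=$ the first hit of $T$ after $\tau_{2k-1}$, $\tau_{2k+1}:=$ the first hit of $R$ after $\tau_{2k}$, and $\nu:=\inf\{k\geq 1:X_{\tau_k}\in A\}$. The key one-step bound is: on $\{\nu\geq 2\}=\{X_{\tau_1}\in R\setminus T\}$, the fact that $T\in\Ec$ yields $f(X_{\tau_1})\leq J(X_{\tau_1},T)$, and combining the strong Markov property at $\tau_1$ with \eqref{DI} in the form $\delta(\tau_2)\geq\delta(\tau_1)\delta(\tau_2-\tau_1)$ gives
\begin{equation*}
\E^x[\delta(\tau_1)f(X_{\tau_1});\nu\geq 2]\;\leq\; \E^x[\delta(\tau_1)J(X_{\tau_1},T);\nu\geq 2]\;\leq\; \E^x[\delta(\tau_2)f(X_{\tau_2});\nu\geq 2].
\end{equation*}
Iterating symmetrically (at even steps using $f\leq J(\cdot,R)$ on $T\setminus R$, via $R\in\Ec$) yields
\begin{equation*}
J(x,R)\;\leq\;\sum_{k=1}^K\E^x[\delta(\tau_k)f(X_{\tau_k});\nu=k]+\E^x[\delta(\tau_K)f(X_{\tau_K});\nu>K].
\end{equation*}
Letting $K\to\infty$: on $\{\nu<\infty\}$, $\tau_\nu=\rho_A$, and polarity of $A\setminus A^r$ under Hunt makes the partial sums converge to $J(x,A)$; on $\{\nu=\infty\}$, continuity of $X$ together with polarity of $\overline R\setminus R$ and $\overline T\setminus T$ forces $\tau_K\to\infty$ (otherwise $\lim X_{\tau_K}\in\overline R\cap\overline T\subseteq A$ a.s., contradicting $\nu=\infty$), and the tail vanishes by dominated convergence using \eqref{assum.bound} and \eqref{old asm}.

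Granted $(\star)$, for $x\notin R\cap T$, either $x\notin R$ (so $J(x,R)\geq f(x)$ from $R\in\Ec$) or $x\notin T$ (so $J(x,T)\geq f(x)$); by $(\star)$, $J(x,R\cap T)\geq f(x)$, i.e.\ $x\notin S(R\cap T)$. Hence $\Theta(R\cap T)\subseteq R\cap T$. Since the fine topology is a topology, $R\cap T$ is finely closed, so Proposition~\ref{prop.decrease.theta} produces $\Theta(R\cap T)\in\Ec$ finely closed with $(R\cap T)\setminus\Theta(R\cap T)$ polar. For $\Theta(R\cap T)\in\Lc$, note $\overline{R\cap T}\setminus(R\cap T)\subseteq(\overline R\setminus R)\cup(\overline T\setminus T)$ is semipolar, so
\begin{equation*}
\overline{\Theta(R\cap T)}\setminus\Theta(R\cap T)\;\subseteq\;\bigl(\overline{R\cap T}\setminus(R\cap T)\bigr)\cup\bigl((R\cap T)\setminus\Theta(R\cap T)\bigr)
\end{equation*}
is semipolar as well. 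Thus $\Theta(R\cap T)\in\tEc$. Finally, since $\Theta(R\cap T)\in\Ec$ and $\Theta(R\cap T)\subseteq R$ (respectively $\subseteq T$), two applications of Lemma~\ref{lemma.smaller.eq} deliver \eqref{eq.prop.intersection}.

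The main obstacle is the alternating-hitting argument for $(\star)$: the bookkeeping relies crucially on Hunt's hypothesis to make the various exceptional semipolar sets genuinely negligible, on \eqref{DI} for the telescoping, and on \eqref{assum.bound} and \eqref{old asm} for the tail control when $\nu=\infty$.
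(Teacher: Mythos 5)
Your overall route is the paper's route: establish $f\le J(\cdot,R\cap T)$ off $R\cap T$, conclude $\Theta(R\cap T)\subseteq R\cap T$, invoke Proposition~\ref{prop.decrease.theta} to get membership in $\tEc$, and finish with two applications of Lemma~\ref{lemma.smaller.eq}. Where you differ is that the paper obtains the key inequality by citing \cite[Proposition~4.8]{huang2017optimal} for the closed sets $\overline R,\overline T$ and then transfers it to $R,T$ via polarity of $\overline R\setminus R$, $\overline T\setminus T$ and Remark~\ref{rem:Lc stopping}; you instead run the alternating-hitting argument directly on $R$ and $T$, which is a legitimate reconstruction of what that cited proof does.

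There is, however, a real overclaim. Your $(\star)$ as stated, for \emph{all} $x\in\R^d$, is false in general in dimension $d\ge 2$; the paper records exactly this in the Remark immediately following the Proposition, with the example of two closed equilibria $R,T$ whose intersection is a single point (polar), so $J(x,R\cap T)=0 < J(x,R)\vee J(x,T)$. Your alternating argument in fact breaks down precisely at such $x\in R\cap T$: if $x\in R^r$ then $\tau_1=\rho_R=0$, $X_{\tau_1}=x\in R\cap T$, so $\nu=1$ and $\tau_\nu=0$, whereas $\rho_{R\cap T}$ may well be infinite -- the identity ``$\tau_\nu=\rho_A$ on $\{\nu<\infty\}$'' requires $\tau_\nu>0$, which you do get as soon as $x\notin R\cap T$ (whichever of $R,T$ misses $x$ is finely closed, forcing its hitting time to be strictly positive). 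So the argument proves $(\star)$ only on $(R\cap T)^c$. Fortunately that is all you use downstream, so the proof survives; but the stated scope of $(\star)$ and the line ``the single hard step is $(\star)$ for all $x\in\R^d$'' should be corrected to $(R\cap T)^c$.

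One more small repair in the tail control: your claim that polarity ``forces $\tau_K\to\infty$ on $\{\nu=\infty\}$'' is not justified -- a finite accumulation point $\tau_\infty$ is perfectly compatible with $\nu=\infty$; what polarity plus path continuity give you is that on $\{\nu=\infty,\ \tau_\infty<\infty\}$ one has $X_{\tau_\infty}\in \overline R\cap\overline T$ and hence (modulo a polar set) $X_{\tau_\infty}\in R\cap T$, so in fact $\rho_{R\cap T}=\tau_\infty$ there. The tail term then converges to the corresponding contribution of $J(x,R\cap T)$ on that event, and vanishes on $\{\tau_\infty=\infty\}$ by \eqref{old asm}. This is a milder statement than ``$\tau_K\to\infty$'' but it is what the argument actually needs.
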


\begin{proof}
Fix $R$, $T\in\tEc=\Ec\cap\Lc$. By the same arguments in the proof of \cite[Proposition 4.8]{huang2017optimal}, we get
\begin{equation}\label{from previous}
J(x, \overline R\cap \overline T)\geq J(x, \overline R)\vee J(x, \overline T), \quad \forall x \in (R\cap T)^c.
\end{equation}
As $R$, $T\in \Lc$, $\rho_R = \rho_{\overline R}$ and $\rho_T=\rho_{\overline T}$ $\P^x$-a.s. for all $x\in\R^d$ (Remark \ref{rem:Lc stopping}). It follows that 
\begin{equation}\label{JR=}
J(x,R)=J(x,\overline R)\quad \hbox{and}\quad J(x,T)=J(x,\overline T),\quad \hbox{$\forall x\in\R^d$}. 
\end{equation}
Moreover, by the same argument above \eqref{intersect polar}, we have $(\overline R\cap \overline T) \setminus (R\cap T)\subseteq(\overline R\setminus R) \cup (\overline T \setminus T)$. Since $\overline R\setminus R$ and $\overline T\setminus T$ are polar (Remark \ref{rem:Lc stopping}), so is $(\overline R\cap \overline T)\setminus (R\cap T)$. It follows that 
\begin{equation}\label{JRT=}
J(x,R\cap T) = J(x,\overline R\cap \overline T),\quad \forall x\in\R^d. 
\end{equation}
Now, by the fact $R,T\in\Ec$, we obtain from \eqref{JR=}, \eqref{from previous}, and \eqref{JRT=} that
\begin{equation}\label{eq.prop.intersection4}
f(x)\leq J(x, R)\vee J(x, T) =J(x, \overline R)\vee J(x, \overline T)\leq  J(x, \overline R\cap \overline T)= J(x,R\cap T), \quad \forall x\in (R\cap T)^c.
\end{equation}
This particularly implies $S(R\cap T) \subseteq R\cap T$, and thus
\begin{equation}\label{eq.prop.intersection3}
\Theta(R\cap T) = S(R\cap T) \cup (I(R\cap T)\cap(R\cap T))\subseteq R\cap T.
\end{equation}
By Proposition \ref{prop.decrease.theta}, this readily shows that $\Theta(R\cap T)\in\Ec$, $\Theta(R\cap T)$ is finely closed (as $R\cap T$ is finely closed), and
\begin{equation}\label{RT polar}
(R\cap T) \setminus \Theta(R\cap T)\ \hbox{is polar}. 
\end{equation}
Note that \eqref{eq.prop.intersection3} also implies
\bee
\Theta(R\cap T)\subseteq  \overline{\Theta(R\cap T)}\subseteq \overline R\cap \overline T .
\eee
It follows that
\begin{align*}
\overline{\Theta(R\cap T)} \setminus \Theta(R\cap T)  &\subseteq \big(\overline R\cap \overline T\big)\setminus \Theta(R\cap T)\\
 &\subseteq \Big((R\cap T) \setminus \Theta(R\cap T)\Big) \cup (\overline R\setminus R) \cup (\overline T\setminus T). 
\end{align*}
As the three sets in the second line above are all polar (recall \eqref{RT polar} and Remark \ref{rem:Lc stopping}), we conclude that $\overline{\Theta(R\cap T)} \setminus \Theta(R\cap T)$ is polar, and thus $\Theta( R\cap T) \in \Lc$. Hence, $\Theta( R\cap T)\in\Ec\cap\Lc=\tEc$. Finally, thanks to $\Theta(R\cap T) \subseteq R$ (by \eqref{eq.prop.intersection3}) and $\Theta(R\cap T)\in\Ec$, Lemma \ref{lemma.smaller.eq} asserts $J(x, \Theta(R\cap T))\geq J(x, R)$ for all $x \in \R^d$. A similar argument shows that $J(x, \Theta(R\cap T))\geq J(x, T)$ for all $x \in \R^d$. We can then conclude that \eqref{eq.prop.intersection} holds.  
\end{proof}

\begin{Remark}
In \eqref{from previous}, the inequality is guaranteed for only $x\in(R\cap T)^c$, although the corresponding one-dimensional result holds for all $x\in\R$; see \cite[Proposition 4.8]{huang2017optimal}. For instance, for $d\ge 2$, if there exist two closed equilibria $R$ and $T$ such that $R\cap T = \{x\}$ for some $x\in\R^d$, then $\rho_{R\cap T} = \rho_{\{x\}} =\infty$ $\P^x$-a.s., for a wide range of Markov processes $X$. By \eqref{old asm}, this implies $J(x,R\cap T)=0$, which is unlikely to be equal to $J(x,R)\vee J(x,T)$. By contrast, for $d=1$, \eqref{1-d condition} ensures $\rho_R=\rho_T=\rho_{R\cap T} = \rho_{\{x\}} =0$ $\P^x$-a.s., so that $J(x,R\cap T)= f(x)=J(x,R)\vee J(x,T)$.
\end{Remark}


Proposition \ref{prop.intersection} provides a partial order among elements in $\tEc$.

\begin{Corollary}\label{coro:partial order}
Assume Assumptions~\ref{asm:reference measure} and \ref{assum.hunt}, and that \eqref{assum.bound} and \eqref{old asm} holds for all $x\in\R^d$. Then, for any $R$, $T\in \tEc$ with $\overline R\subseteq \overline T$, 
$J(x, R)\geq J(x, T)$ for all $x\in \R^d$.
\end{Corollary}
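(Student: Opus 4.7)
The plan is to reduce the comparison to a single application of Proposition~\ref{prop.intersection}, via the observation that under the hypothesis $\overline R \subseteq \overline T$, the set $R$ differs from $R \cap T$ only by a polar piece. Since $R \in \Lc$ is finely closed, $R \subseteq \overline R \subseteq \overline T$, so $R \setminus T \subseteq \overline T \setminus T$. By Remark~\ref{rem:Lc stopping}, the latter is polar under Hunt's hypothesis, and hence so is $R \setminus T$.

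From this polarity I would deduce $J(x, R) = J(x, R \cap T)$ for every $x \in \R^d$. The inclusion $R \cap T \subseteq R$ gives $\rho_R \le \rho_{R\cap T}$. On $\{\rho_R < \infty\}$, Remark~\ref{rem:X_rhoA in A^*} together with $R = R^*$ places $X_{\rho_R}$ in $R$; the alternative $X_{\rho_R} \in R \setminus T$ would force $X$ to visit a polar set and therefore has $\P^x$-probability zero. Hence $X_{\rho_R} \in R \cap T$ $\P^x$-a.s.\ on $\{\rho_R < \infty\}$, which yields $\rho_{R \cap T} \le \rho_R$, and so $\rho_R = \rho_{R \cap T}$ $\P^x$-a.s. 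The identity $J(x, R) = J(x, R \cap T)$ follows immediately.

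Next I would invoke Proposition~\ref{prop.intersection}, which applies since $R, T \in \tEc$, to obtain $\Theta(R \cap T) \in \tEc$ with $\Theta(R \cap T) \subseteq R \cap T$ and
\[
J(x, \Theta(R \cap T)) \ge J(x, R) \vee J(x, T), \quad \forall x \in \R^d.
\]
By Proposition~\ref{prop.decrease.theta}, $(R \cap T) \setminus \Theta(R \cap T)$ is polar; and since $R \cap T$ is the intersection of two finely closed sets and is therefore finely closed (Remark~\ref{rem:fine topology}), the same hitting-time argument as before, now with $R \cap T$ in place of $R$ and $\Theta(R \cap T)$ in place of $R \cap T$, delivers $J(x, R \cap T) = J(x, \Theta(R \cap T))$. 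Chaining the two identities with the lower bound gives $J(x, R) = J(x, R \cap T) = J(x, \Theta(R \cap T)) \ge J(x, T)$ for every $x \in \R^d$, as desired. The key point to watch throughout is that passing from ``polar difference of sets'' to ``equal hitting times $\P^x$-a.s.''\ requires the larger of the two sets to be finely closed, so that $X$ is guaranteed to lie in it at the hitting instant; this holds in both uses above because $R, T, R \cap T \in \Lc$.
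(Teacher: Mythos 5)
Your proof is correct and follows essentially the same route as the paper's: show that $R$ differs from $\Theta(R\cap T)$ only by a polar set, conclude that the hitting times agree, and invoke Proposition~\ref{prop.intersection} for the lower bound $J(x,\Theta(R\cap T))\ge J(x,T)$. The paper does this in a single step — observing $R\setminus\Theta(R\cap T)\subseteq(R\setminus T)\cup\big((R\cap T)\setminus\Theta(R\cap T)\big)$, a union of two polar sets — whereas you pass through the intermediate identity $J(x,R)=J(x,R\cap T)$. Both are fine.

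One small correction to your closing remark: passing from ``polar difference'' to ``equal hitting times $\P^x$-a.s.'' does \emph{not} require the larger set to be finely closed. Writing $R=(R\cap T)\cup(R\setminus T)$, one has $\rho_R=\rho_{R\cap T}\wedge\rho_{R\setminus T}$, and $\rho_{R\setminus T}=\infty$ $\P^x$-a.s.\ by polarity, so $\rho_R=\rho_{R\cap T}$ $\P^x$-a.s.\ with no regularity hypothesis at all. Your route via ``$X_{\rho_R}\in R\cap T$ on $\{\rho_R<\infty\}$'' is subtly off at $\rho_R=0$: if $x\in R\setminus T$ and $x$ is regular to $R\cap T$, then $X_{\rho_R}=X_0=x\notin R\cap T$, yet $\rho_{R\cap T}=0=\rho_R$ still holds. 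The union argument sidesteps this entirely and is what the paper implicitly uses throughout (Remark~\ref{rem:Lc stopping}, Proposition~\ref{prop.decrease.theta}, etc.).
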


\begin{proof}
For any $R$, $T\in \tEc$, note that 
$R \setminus \Theta({R}\cap{T}) \subseteq ({R}\setminus {T})\cup \big(({R}\cap {T}) \setminus\Theta({R}\cap {T})\big)$. 
With $\overline R\subseteq \overline T$, ${R}\setminus {T}\subseteq \overline{{T}} \setminus {T}$ is polar (Remark~\ref{rem:Lc stopping}). Recalling from \eqref{RT polar} that  $({R}\cap {T}) \setminus\Theta({R}\cap {T})$ is also polar, we conclude that ${R}\setminus \Theta({R}\cap {T})$ is polar. Hence, $\rho_{R}=\rho_{\Theta(R\cap T)}$ $\P^x$-a.s. for all $x\in\R^d$, and thus 
\bee
J(x, {R})=J(x, \Theta({R}\cap{T}))\ge J(x,T) \quad \forall x\in \R^d,
\eee
where the inequality follows from Proposition \ref{prop.intersection}.
\end{proof}


\subsection{The Main Result}
Before we state the main result of this paper, we need a convergence result for first hitting times. 

\begin{Lemma}\label{lemma.hitting.time}
Suppose Assumption  \ref{assum.hunt} holds. 
Let $(R_n)_{n\in\N}$ be a nonincreasing sequence in $\Lc$, and set $R:=\bigcap_{n\in \mathbb{N}}{R_n}$. 
Then
\begin{equation}\label{converge}
\lim\limits_{n\rightarrow \infty} \rho_{R_n}=\rho_{R}\ \ \P^x\hbox{-a.s.} \quad  \forall x\in R^c.
\end{equation}
\end{Lemma}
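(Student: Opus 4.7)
Write $\sigma := \lim_{n\to\infty}\rho_{R_n}$, which is well-defined because $R_n\downarrow R$ forces $\rho_{R_n}$ to be nondecreasing, and $R_n\supseteq R$ gives the upper bound $\sigma\le\rho_R$. The entire task is therefore to prove $\sigma\ge\rho_R$ $\P^x$-a.s.\ for $x\in R^c$; on $\{\sigma=\infty\}$ this is automatic, so the real work is on $\{\sigma<\infty\}$.

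The first step is to replace each $R_n$ by its Euclidean closure. Since $R_n\in\Lc$ and Assumption~\ref{assum.hunt} promotes the semipolarity of $\overline{R_n}\setminus R_n$ to polarity, Remark~\ref{rem:Lc stopping} gives $\rho_{R_n}=\rho_{\overline{R_n}}$ $\P^x$-a.s., so $X_{\rho_{R_n}}=X_{\rho_{\overline{R_n}}}\in\overline{R_n}$ on $\{\rho_{R_n}<\infty\}$. On $\{\sigma<\infty\}$ all $\rho_{R_n}$ are finite, and by continuity of the paths $X_{\rho_{R_n}}\to X_\sigma$. For each fixed $m$, the tail $\{X_{\rho_{R_n}}\}_{n\ge m}$ lies in the closed set $\overline{R_m}$, hence $X_\sigma\in\overline{R_m}$ for every $m$. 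Thus $X_\sigma\in D:=\bigcap_n\overline{R_n}$ on $\{\sigma<\infty\}$.

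Next I would show $D\setminus R$ is polar. Copying the pointwise argument used for \eqref{intersect polar} in the proof of Lemma~\ref{lemma.set.0}, any $y\in D\setminus R$ satisfies $y\in\overline{R_n}$ for all $n$ but $y\notin R_{n_0}$ for some $n_0$, so $y\in\overline{R_{n_0}}\setminus R_{n_0}$. Therefore
\[
D\setminus R\ \subseteq\ \bigcup_{n\in\N}\bigl(\overline{R_n}\setminus R_n\bigr),
\]
a countable union of semipolar sets. By Assumption~\ref{assum.hunt} each $\overline{R_n}\setminus R_n$ is polar, and a countable union of polar sets is polar; hence $D\setminus R$ is polar.

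The final piece uses fine closedness to rule out the time $0$. Since $x\in R^c=\bigcup_n R_n^c$, there is an $n_0$ with $x\notin R_{n_0}$; for every $n\ge n_0$ we have $x\notin R_n=R_n^*$, so $x$ is not regular to $R_n$, which by Blumenthal's zero-one law means $\rho_{R_n}>0$ $\P^x$-a.s. Thus $\sigma\ge\rho_{R_{n_0}}>0$ $\P^x$-a.s. On $\{\sigma<\infty\}$ we then have $X_\sigma\in D$ at a strictly positive time, but $X$ never visits the polar set $D\setminus R$ (after time $0$), so $X_\sigma\in R$ a.s., yielding $\rho_R\le\sigma$. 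Combining with $\sigma\le\rho_R$ finishes the proof.

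The main obstacle I anticipate is the $\sigma=0$ possibility: without the structural assumption $R_n=R_n^*$ one could not exclude $\rho_{R_n}=0$ for $x\in R^c$, and the argument placing $X_\sigma$ in $R$ would collapse. The fine-closedness component of the definition of $\Lc$ is precisely what rescues this step, while Hunt's hypothesis is what upgrades the Lemma~\ref{lemma.set.0}-style inclusion into the polarity of $D\setminus R$.
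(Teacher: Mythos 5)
Your proposal is correct and follows essentially the same path as the paper's proof: derive $X_\sigma\in\bigcap_n\overline{R_n}$ from monotonicity and path continuity, invoke Hunt's hypothesis to discard the polar difference $\overline{R_n}\setminus R_n$, and use fine closedness of $R_{n_0}$ to guarantee $\sigma>0$ so the polarity argument applies at time $\sigma$. The only cosmetic difference is that you collect the exceptional sets into a single polar set $D\setminus R$ before applying it, whereas the paper strips off $\overline{R_m}\setminus R_m$ one $m$ at a time.
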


\begin{proof}
Fix $x\in R^c$. Set $\tau_n:=\rho_{R_n}$, and define $\tau:=\lim_{n\rightarrow \infty} \tau_n$. As $R\subseteq R_n$, we must have $\tau\leq \rho_R$. Hence, 
it suffices to prove 
\begin{equation}\label{tau>}
\tau\geq \rho_R\quad \P^x\text{-a.s. on}\  \  \{\tau <\infty\}.
\end{equation}
For each $m\in\N$, as $(R_n)_{n\in\N}$ is nonincreasing, $(\overline{R_n})_{n\in\N}$ is also nonincreasing. It follows that
\bee
X^x_{\tau_n}\in \overline{R_m}\quad \forall n\geq m,\quad  \P^x\hbox{-a.s. on}\ \ \{\tau <\infty\}.
\eee
As $n\to\infty$, by the continuity of $t\mapsto X^x_t$, this implies  
$X^x_\tau \in \overline{R_m}$ $\P^x$-a.s. on $\{\tau <\infty\}$.
Since $\overline{R_m}\setminus R_m$ is polar (recall $R_m\in \Lc$ and Remark~\ref{rem:Lc stopping}), the above relation can be equivalently written as
$X^x_\tau \in R_m$ $\P^x$-a.s. on $\{\tau <\infty\}$. By the arbitrariness of $m\in\N$,  we conclude 
\begin{equation}\label{in R}
X^x_\tau \in \bigcap_{m} R_m=R,\quad  \P^x\text{-a.s. on}\ \ \{\tau <\infty\}.
\end{equation}
As $x\in R^c$, $x\notin R_{n_0}$ for some $n_0\in\N$. Since $R_{n_0}$ is finely closed, $x\notin R_{n_0}$ implies that $x$ is not regular to $R_{n_0}$, i.e. $\tau_{n_0}=\rho_{R_{n_0}}>0$ $\P^x$-a.s. Consequently, $\tau>0$ $\P^x$-a.s. We then deduce from $\tau>0$ and \eqref{in R} that \eqref{tau>} holds.
\end{proof}

\begin{Remark}
We require ``$x\in R^c$'' in \eqref{converge}, as the convergence need not hold for $x\in R$. For instance, for any $d\ge 2$, let $X$ be a $d$-dimensional Brownian motion and $R_n\in\Lc$ be the closed ball around the origin $O:=(0,0,...,0)\in\R^d$ with radius $1/n$, for all $n\in\N$. Clearly, $R:=\bigcap_{n\in \mathbb{N}}{R_n}=\{O\}$. For $x=O\in R$, we have $\rho_{R_n} =0$ for all $n\in\N$, but $\rho_R=\infty$.   
\end{Remark}

Now, we are ready to state the main result of this paper.

\begin{Theorem}\label{thm.optimal}
Assume Assumptions \ref{asm:reference measure} and \ref{assum.hunt}, and that \eqref{assum.bound} and \eqref{old asm} hold for all $x\in\R^d$. Then, there exists ${R_\star}\in\tEc$ that is optimal among $\tEc$.  
Moreover, 
$\overline{R_\star} = \bigcap_{R\in \tEc}\Rb$. 
\end{Theorem}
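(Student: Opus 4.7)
The plan is to construct $R_\star$ by taking the intersection of a carefully chosen countable subfamily of $\tEc$ and then applying $\Theta$ once. Since $\R^d$ is second countable (hence Lindel\"of), the open cover $\{(\overline R)^c:R\in\tEc\}$ of $\big(\bigcap_{R\in\tEc}\overline R\big)^c$ admits a countable subcover, giving a sequence $(R_n)_{n\in\N}\subseteq\tEc$ with
$$\bigcap_{n\in\N}\overline{R_n}=\bigcap_{R\in\tEc}\overline R.$$
Starting from $T_1:=R_1$, I recursively define $T_{n+1}:=\Theta(T_n\cap R_{n+1})$. Proposition~\ref{prop.intersection} ensures $T_n\in\tEc$, $T_{n+1}\subseteq T_n\cap R_{n+1}$, and $J(\cdot,T_{n+1})\ge J(\cdot,T_n)\vee J(\cdot,R_{n+1})$ for every $n$; in particular $(T_n)$ is a nonincreasing sequence in $\tEc\subseteq\Lc$, with $T_n\subseteq\bigcap_{k\le n}R_k$.

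Let $R_0:=\bigcap_n T_n$. By Lemma~\ref{lemma.set.0}, $R_0\in\Lc$, and $\overline{R_0}\subseteq\bigcap_n\overline{T_n}\subseteq\bigcap_n\overline{R_n}=\bigcap_{R\in\tEc}\overline R$. I next show $\Theta(R_0)\subseteq R_0$. For any $x\in R_0^c$, Lemma~\ref{lemma.hitting.time} gives $\rho_{T_n}\to\rho_{R_0}$ $\P^x$-a.s.; together with continuity of $\delta$, $f$, and paths of $X$, and using \eqref{assum.bound} for domination and \eqref{old asm} to handle $\{\rho_{R_0}=\infty\}$, dominated convergence yields $J(x,T_n)\to J(x,R_0)$. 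Since $R_0^c=\bigcup_n T_n^c$ is an increasing union, $x$ lies in $T_n^c$ for all sufficiently large $n$; for such $n$, $x\notin T_n=S(T_n)\cup(I(T_n)\cap T_n)$ forces $J(x,T_n)\ge f(x)$, and passing to the limit yields $J(x,R_0)\ge f(x)$. Hence $S(R_0)\subseteq R_0$, so $\Theta(R_0)\subseteq R_0$.

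Now Proposition~\ref{prop.decrease.theta} applies: $R_\star:=\Theta(R_0)$ is a finely closed equilibrium and $R_0\setminus R_\star$ is polar. Since $R_0\in\Lc$ and Assumption~\ref{assum.hunt} make $\overline{R_0}\setminus R_0$ polar, the inclusion
$$\overline{R_\star}\setminus R_\star\subseteq(\overline{R_0}\setminus R_0)\cup(R_0\setminus R_\star)$$
shows $R_\star\in\Lc$, and therefore $R_\star\in\tEc$. From $R_\star\subseteq R_0$ one gets $\overline{R_\star}\subseteq\bigcap_{R\in\tEc}\overline R$, while the reverse inclusion is immediate from $R_\star\in\tEc$. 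Finally, for every $R\in\tEc$ we have $\overline{R_\star}\subseteq\overline R$, so Corollary~\ref{coro:partial order} delivers $J(\cdot,R_\star)\ge J(\cdot,R)$ pointwise, whence $V(\cdot,R_\star)\ge V(\cdot,R)$.

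The main obstacle I anticipate is the step $\Theta(R_0)\subseteq R_0$: one must interchange the $\Theta$-operation with a countable decreasing limit, and this relies essentially on Lemma~\ref{lemma.hitting.time}, which itself hinges on Hunt's hypothesis upgrading the semipolar boundary of each $T_n\in\Lc$ to a polar one. Beyond that, the second-countability reduction to a single sequence $(R_n)$, the monotone improvement via Proposition~\ref{prop.intersection}, and the final passage from $R_0$ to an honest equilibrium via Proposition~\ref{prop.decrease.theta} have to be chained carefully so that the closure identity $\overline{R_\star}=\bigcap_{R\in\tEc}\overline R$ is preserved through both the countable intersection and the one extra application of $\Theta$.
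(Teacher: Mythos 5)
Your proof is correct and follows essentially the same route as the paper's: extract a countable subfamily $(R_n)$ of $\tEc$ with $\bigcap_n\overline{R_n}=\bigcap_{R\in\tEc}\overline R$, build the nonincreasing chain $T_{n+1}=\Theta(T_n\cap R_{n+1})$ via Proposition~\ref{prop.intersection}, set $R_\circ=\bigcap_n T_n$, use Lemma~\ref{lemma.hitting.time} and dominated convergence to get $\Theta(R_\circ)\subseteq R_\circ$, upgrade to $R_\star=\Theta(R_\circ)\in\tEc$ via Proposition~\ref{prop.decrease.theta} and the polar-boundary bookkeeping, and finish with Corollary~\ref{coro:partial order}. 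The only cosmetic deviation is the countable-selection step, where you invoke second countability/Lindel\"of of $\R^d$ directly rather than citing \cite[Proposition 4.1]{bayraktar2012stochastic} as the paper does; both yield the same reduction.
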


\begin{proof}
Consider $\widetilde{R} := \bigcap_{R\in \tEc} \overline{R}$. As an intersection of closed sets, $\widetilde R$ is closed. Since the indicator function of a closed set is upper semi-continuous, \cite[Proposition 4.1 ]{bayraktar2012stochastic} implies that there exists a countable subset $({R_n})_{n\in\N}$ of $\tEc$ such that $\widetilde R=\bigcap_n \overline{R_n}$. Define $(T_n)_{n\in\N}$ by
\bee
T_1:=R_1,\quad T_n:=\Theta(T_{n-1}\cap R_n)\quad \hbox{for}\ n\ge 2.
\eee 
By applying Proposition \ref{prop.intersection} to $(T_n)_{n\in\N}$ recursively, we have 
\begin{align}\label{Tn in tEc}
T_n\in \tEc\quad \forall n\in\N,
\end{align}
as well as
\begin{align}\label{Tn in Rn}
T_{n+1}=\Theta(T_n\cap R_{n+1})&\subseteq T_n\cap R_{n+1}\notag\\
&\subseteq T_n = \Theta(T_{n-1}\cap R_n)\subseteq T_{n-1}\cap R_n\subseteq \overline{R_n},\quad \forall n\ge 2.
\end{align}
Consider $R_\circ:=\bigcap_n T_n$. We deduce from Lemma \ref{lemma.set.0}, \eqref{Tn in tEc}, and \eqref{Tn in Rn} that 
\begin{equation}\label{eq.thm.optimal2}
R_\circ\in \Lc\quad \hbox{and}\quad R_\circ\subseteq \bigcap_n \overline{R_n}=\widetilde{R}.
\end{equation}
Now, for any $x\in (R_\circ)^c$, as $(T_n)_{n\in\N}$ in $\tEc=\Ec\cap\Lc$ is nonincreasing (see \eqref{Tn in Rn}), Lemma \ref{lemma.hitting.time} entails $\rho_{T_n}\to \rho_{R_\circ}$ $\P^x$-a.s. Thanks to this, the continuity of $\delta$, $f$, and $t\mapsto X_t$, and \eqref{assum.bound}, we conclude from the dominated convergence theorem that 
\begin{equation}\label{eq.thm.optimal1}
\begin{aligned}
\lim_{n\rightarrow \infty} J(x, T_n)  = \lim_{n\rightarrow \infty}\E^x[\delta(\rho_{T_n}) f(X_{\rho_{T_n}})] = \E^x[ \delta(\rho_{R_\circ}) f(X_{\rho_{R_\circ}})]= J(x, R_\circ).
\end{aligned}
\end{equation}
On the other hand, the fact that $x\notin R_\circ=\bigcap_n T_n$ and $(T_n)_{n\in\N}$ is nonincreasing implies that there exists $n_0\in\N$ such that $x\notin T_{n}$ for all $n\ge n_0$. This, together with $T_n\in\Ec$ (by \eqref{Tn in tEc}), indicates 
$f(x)\leq J(x, T_{n})$ for all $n\ge n_0$. 
Combining this with \eqref{eq.thm.optimal1}, we obtain
\bee
f(x)\leq J(x, R_\circ)\quad \forall x\in (R_\circ)^c.
\eee
This shows that $S(R_\circ)\subseteq R_\circ$, so that 
\begin{equation}\label{dadu}
\Theta(R_\circ)= S(R_\circ)\cup (I(R_\circ)\cap R_\circ)\subseteq R_\circ. 
\end{equation}
By Proposition \ref{prop.decrease.theta}, this implies the following properties: 
\begin{equation}\label{polar again}
R_\circ \setminus \Theta(R_\circ)\quad \hbox{is polar};
\end{equation}
\begin{equation}\label{miss Lc}
{R_\star}:=\Theta(R_\circ)\ \hbox{belongs to $\Ec$ and is finely closed}.   
\end{equation}
Note that \eqref{dadu} implies $\Theta(R_\circ)\subseteq  \overline{\Theta(R_\circ)}\subseteq \overline{R_\circ}$, which gives
\begin{equation}\label{eq.thm.optimal0}
\overline{\Theta(R_\circ)} \setminus  \Theta(R_\circ) \subseteq  \overline{R_\circ} \setminus  \Theta(R_\circ) \subseteq (\overline{R_\circ} \setminus R_\circ)\cup ( R_\circ \setminus  \Theta(R_\circ)).
\end{equation}
As $R_\circ\in \Lc$ (by \eqref{eq.thm.optimal2}), $\overline{R_\circ} \setminus R_\circ$ is polar (recall Remark~\ref{rem:Lc stopping}). This, together with \eqref{polar again}, shows that the right hand side of \eqref{eq.thm.optimal0} is polar, and thus $\overline{\Theta(R_\circ)} \setminus  \Theta(R_\circ)$ is polar. We then conclude from \eqref{miss Lc} that ${R_\star}=\Theta(R_\circ)\in \tEc$. 
By \eqref{dadu} and \eqref{eq.thm.optimal2}, 
\begin{equation}\label{fR in tR}
{R_\star} \subseteq \widetilde R = \bigcap_{R\in\tEc} \overline{R}. 
\end{equation}
Hence, for any $R\in\tEc$, we have ${R_\star}\subseteq \overline R$. With ${R_\star}\in\Ec$, Corollary \ref{coro:partial order} gives $J(x, {R_\star})\ge J(x,R)$ for all $x\in \R^d$. Therefore, ${R_\star}$ is optimal among $\tEc$. Also, the fact ${R_\star}\in \tEc$ implies $\widetilde R=\bigcap_{R\in\tEc} \overline{R} \subseteq {\overline{R_\star}}$. This, together with \eqref{fR in tR}, entails ${\overline{R_\star}}=\widetilde{R}= \bigcap_{R\in \tEc} \overline{R}$.
\end{proof}


\subsection{An Illustration of the Use of Theorem~\ref{thm.optimal}}
Take $d=2$ and let $X=(X^{(1)}, X^{(2)})$ be a two-dimensional Brownian motion, which clearly satisfies Assumptions \ref{asm:reference measure} and \ref{assum.hunt}. We take up the hyperbolic discount function $\delta(t):=\frac{1}{1+\beta t}$ for some $\beta>0$, which satisfies \eqref{DI}. Given $a>0$, consider the payoff function $f:\R^2\to\R_+$ defined by 
\begin{equation}\label{f with a}
f(x_1,x_2) := |x_1-x_2|\wedge a,\quad \forall (x_1,x_2)\in\R^2.
\end{equation}
As $f$ is bounded, \eqref{assum.bound} and \eqref{old asm} hold trivially for all $x\in\R^2$. Now, for any $x\in\R^2$ and $R\in \B$, \eqref{J} takes the form
\begin{equation}\label{eq.eg1} 
J(x, R)=\E^x\bigg[\frac{|X^{(1)}_{\rho_R}-X^{(2)}_{\rho_R}| \wedge a}{1+\beta \rho_R}\bigg].
\end{equation}
This describes the situation where an agent expects the prices of two securities, $X^{(1)}_{t}$ and $X^{(2)}_t$, to diverge, and decides to take advantage of it via a long iron butterfly spread. Indeed, by taking $X^{(1)}$ as the underlying asset and $X^{(2)}$ as the (floating) target to compare with, the payoff of a long iron butterfly spread at time $t$ is exactly $f(X^{(1)}_t, X^{(2)}_t)$. Also, modeling security prices using Brownian motions, as in the Bachelier model, is justified when the price process may take negative values. In practice, CME Group (a mainstream financial derivatives exchange) announced in April 2020 the adoption of the Bachelier model to accommodate negative prices of certain energy derivatives.\footnote{See the announcement at \href{https://www.cmegroup.com/notices/clearing/2020/04/Chadv20-171.html}{https://www.cmegroup.com/notices/clearing/2020/04/Chadv20-171.html}.} 

To simplify \eqref{eq.eg1}, we change the coordinates by a rotation of $\pi/4$. Specifically, define 
\[
Y_t=\left[\begin{matrix}
Y^{(1)}_t\\
Y^{(2)}_t
\end{matrix}
\right]:=M \left[\begin{matrix}
X^{(1)}_t\\
X^{(2)}_t
\end{matrix}
\right],\quad \hbox{with}\ M:=
\left[
\begin{matrix}
\frac{1}{\sqrt{2}}&\frac{1}{\sqrt{2}}\\
-\frac{1}{\sqrt{2}}&\frac{1}{\sqrt{2}}
\end{matrix}
\right].
\]
In particular, 
\begin{equation}\label{Y_2}
Y^{(1)}_t := \frac{1}{\sqrt{2}}(X^{(1)}_2+X^{(2)}_t)\quad\hbox{and}\quad Y^{(2)}:=\frac{1}{\sqrt{2}}(X^{(2)}_t-X^{(1)}_t),  
\end{equation}
so that $Y$ remains a two-dimensional Brownian motion. For each $y\in\R^2$, similarly to the definition of $\P^x$ in the second paragraph of Section~\ref{sec.modelsetup}, we define $\P_Y^y$ as the law of $(Y^y_t)_{t\ge 0}$ and denote by $\E^y_Y$ the expectation taken under $\P^y_Y$. Note that
\[
\P^y_Y = \P^x\quad \hbox{for any $x, y\in\R^2$ with $y= Mx$}. 
\]
Moreover, for any $x,y\in\R^2$ and $R,T\in\B$ with $y=Mx$ and $T=MR$, we have
\[
\rho^{Y}_T:=\inf\{t>0: Y_t \in T\} = \inf\{t>0: X_t\in R\}=\rho_R,\quad \P^y_Y\hbox{-a.s. (or $\P^x$-a.s.)} 
\]
Hence, \eqref{eq.eg1} can be re-written as  
\be\label{eq.eg.equiv}  
J(x,R) = J^Y(y, T) := \sqrt{2}\cdot \E^y_Y \left[\frac{\big|Y^{(2)}_{\rho^Y_T}\big| \wedge \frac{a}{\sqrt{2}}}{1+\beta \rho^Y_T}\right], \quad \hbox{whenever}\ \ y = Mx,\ T=M R.
\ee
\begin{Remark}
Although the payoff at stopping in \eqref{eq.eg.equiv} relates only to the one-dimensional Brownian motion $Y^{(2)}$, $J^Y(y, T)$ remains a two-dimensional stopping problem. This is because the stopping time $\rho^Y_T$ still depends on a two-dimensional region $T$ in $\R^2$. 
\end{Remark}

Now, for any $0\le b\le a$, consider the region
\[
R_b := \{(x_1,x_2)\in\R^2 : x_1-x_2\ge  b\ \hbox{or}\ x_1-x_2\le - b\}\in \Lc.
\]

\begin{Lemma}\label{lemma.eg.BM}
Let $a^*>0$ be the unique solution to 
$a^*\int_0^\infty e^{-s}\sqrt{2\beta s}\tanh(a^*\sqrt{2\beta s})ds =1$.
Then, for any $0\le b\le a\wedge \sqrt{2}a^*$, $R_b\in \tEc$. 
\end{Lemma}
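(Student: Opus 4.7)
Since $R_b=\{|x_1-x_2|\ge b\}$ is closed, Remark~\ref{rem:include closed sets} gives $R_b\in\Lc$ immediately, and it only remains to verify $\Theta(R_b)=R_b$. I work in the rotated coordinates from \eqref{Y_2}: setting $c:=b/\sqrt{2}$, the image $T_b:=MR_b=\{y\in\R^2:|y^{(2)}|\ge c\}$, and by \eqref{eq.ed.equiv} (the display \eqref{eq.eg.equiv}) the hitting time $\rho^Y_{T_b}$ depends only on $Y^{(2)}$, which is a one-dimensional standard Brownian motion $W$ starting at $y^{(2)}$. Writing $\rho_c:=\inf\{t>0:|W_t|=c\}$, for $x\notin R_b$ (equivalently $|y^{(2)}|<c$) one has $|W_{\rho_c}|=c$ a.s.; combined with $b\le a$, the payoff becomes $|W_{\rho_c}|\wedge(a/\sqrt{2})=c=b/\sqrt{2}$, so
\be\label{plan.Jformula}
J(x,R_b)\;=\;b\cdot \E^{y^{(2)}}\!\left[\frac{1}{1+\beta\rho_c}\right].
\ee
For $x\in R_b$, $\rho^Y_{T_b}=0$ $\P^x$-a.s., hence $J(x,R_b)=\sqrt{2}(|y^{(2)}|\wedge(a/\sqrt{2}))=f(x)$; thus $x\in I(R_b)\cap R_b\subseteq\Theta(R_b)$.

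Next I treat $x\notin R_b$. Using $\frac{1}{1+\beta\rho}=\int_0^\infty e^{-s(1+\beta\rho)}ds$ together with the Laplace transform $\E^{y^{(2)}}[e^{-\lambda\rho_c}]=\cosh(y^{(2)}\sqrt{2\lambda})/\cosh(c\sqrt{2\lambda})$, \eqref{plan.Jformula} becomes
\be\label{plan.Jexplicit}
J(x,R_b)\;=\;b\int_0^\infty e^{-s}\,\frac{\cosh(y^{(2)}\sqrt{2\beta s})}{\cosh(c\sqrt{2\beta s})}\,ds.
\ee
Since $f(x)=\sqrt{2}|y^{(2)}|\wedge a=\sqrt{2}|y^{(2)}|$ whenever $|y^{(2)}|<c\le a/\sqrt{2}$, the equilibrium inequality $J(x,R_b)\ge f(x)$ for $x\notin R_b$ reduces, by evenness in $y^{(2)}$, to showing
\be\label{plan.goal}
g(u)\;:=\;b\int_0^\infty e^{-s}\,\frac{\cosh(u\sqrt{2\beta s})}{\cosh(c\sqrt{2\beta s})}\,ds-\sqrt{2}\,u\;\ge\;0\qquad\text{for all }u\in[0,c].
\ee

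To prove \eqref{plan.goal}, I first compute the two boundary data: $g(c)=b-\sqrt{2}c=0$, and
\[
g'(c)\;=\;b\int_0^\infty e^{-s}\sqrt{2\beta s}\,\tanh(c\sqrt{2\beta s})\,ds-\sqrt{2}\;=\;\sqrt{2}\Bigl[\,c\int_0^\infty e^{-s}\sqrt{2\beta s}\,\tanh(c\sqrt{2\beta s})\,ds-1\Bigr],
\]
using $b=\sqrt{2}c$. The map $c\mapsto c\int_0^\infty e^{-s}\sqrt{2\beta s}\,\tanh(c\sqrt{2\beta s})\,ds$ is strictly increasing and unbounded, so the defining relation for $a^\ast$ says exactly that $g'(c)\le 0\iff c\le a^\ast\iff b\le\sqrt{2}a^\ast$, which is the hypothesis. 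On the other hand differentiating once more under the integral gives
\[
g''(u)\;=\;b\int_0^\infty e^{-s}\,(2\beta s)\,\frac{\cosh(u\sqrt{2\beta s})}{\cosh(c\sqrt{2\beta s})}\,ds\;>\;0,
\]
so $g$ is strictly convex on $\R$. The tangent-line inequality at $u=c$ then yields $g(u)\ge g(c)+g'(c)(u-c)=g'(c)(u-c)\ge 0$ for every $u\le c$, since both $g'(c)\le 0$ and $u-c\le 0$. This establishes \eqref{plan.goal}, hence $J(x,R_b)\ge f(x)$ for $x\notin R_b$, so $S(R_b)\subseteq R_b$ and $\Theta(R_b)\subseteq R_b$. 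Combined with the first paragraph, $\Theta(R_b)=R_b$, i.e.\ $R_b\in\Ec$, and therefore $R_b\in\tEc$.

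The one nontrivial step is the reduction \eqref{plan.Jformula}--\eqref{plan.Jexplicit} and identifying the correct sharp threshold from $g'(c)\le 0$; once \eqref{plan.goal} is framed as a convex-function problem with $g(c)=0$, the tangent-line bound does the rest and explains why $a^\ast$ is precisely the constant appearing in the lemma.
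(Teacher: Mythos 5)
Your proposal is correct. The reduction to a one-dimensional problem via the rotation $M$, and the observation that $R_b\in\Lc$ because $R_b$ is closed, match the paper exactly; the difference is in how the key inequality $J(x,R_b)\ge f(x)$ for $x\notin R_b$ is established. The paper invokes the Bessel-process analysis from \cite[Lemma 4.4, Proposition 4.5]{huang2018time}, which already furnishes both the strict inequality and the characterization of $a^*$, and so simply cites that result. You instead compute the continuation value explicitly via the Laplace transform $\E^{y}[e^{-\lambda\rho_c}]=\cosh(y\sqrt{2\lambda})/\cosh(c\sqrt{2\lambda})$ of the two-sided exit time and the identity $(1+\beta\rho)^{-1}=\int_0^\infty e^{-s(1+\beta\rho)}\,ds$, and then run a convexity/tangent-line argument on $g$: with $g(c)=0$ and $g''>0$, the condition $g'(c)\le 0$ is exactly $c\le a^*$, and convexity yields $g\ge 0$ on $[0,c]$. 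This is a genuinely self-contained route that also makes transparent \emph{why} $a^*$ shows up (as the value of $c$ at which the smooth-fit slope $g'(c)$ changes sign). One small caveat: your argument gives $g(u)\ge 0$, i.e.\ $J\ge f$ off $R_b$, which is enough to conclude $S(R_b)\subseteq R_b$ and hence $\Theta(R_b)=R_b$; the paper records the strict inequality $J>f$, which follows from your setup too (strict convexity plus $g'(c)\le 0$ gives $g(u)>0$ for $u<c$ unless $g'(c)=0$ and $g$ is affine, which it is not), but you do not need it.
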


\begin{proof}
Fix $0\le b\le a\wedge \sqrt{2}a^*$ and define $T_b:=\{(y_1,y_2)\in\R^2: y_2\geq b/\sqrt{2}\ \hbox{or}\ y_2\leq -b/\sqrt{2}\}$. As $Y^{(2)}$ is a one-dimensional Brownian motion, when taking $T=T_b$ in \eqref{eq.eg.equiv}, we can apply the Bessel process analysis in \cite[Section 4]{huang2018time}. Specifically, with $b\le a$,
\[
J^Y(y,T_b) = \sqrt{2}\cdot \E^y_Y \Bigg[\frac{|Y^{(2)}_{\rho^Y_{T_b}}|}{1+\beta \rho^Y_{T_b}}\Bigg] > \sqrt{2}\cdot |y_2|,\quad \forall y\in\R^2\ \hbox{with}\ y_2\in (-b/\sqrt{2},b/\sqrt{2}),
\]
where the inequality is a direct consequence of \cite[Lemma 4.4]{huang2018time} and $b/\sqrt{2}\le a^*$; note that the characterization of $a^*$ is provided in \cite[Proposition 4.5]{huang2018time}.  Now, observe that $T_b = M R_b$. For any $x=(x_1,x_2)\in\R^2$ such that $x\notin R_b$, we have $y:= Mx\notin T_b$ (i.e., $y_2\in (-b/\sqrt{2},b/\sqrt{2})$). We then deduce from \eqref{eq.eg.equiv} and the previous inequality that 
\[
J(x,R_b) = J^Y(y,T_b)> \sqrt{2}\cdot |y_2| = |x_1-x_2| =  |x_1-x_2|\wedge a = f(x_1,x_2),  
\]
where the third equality follows from $|x_1-x_2|< b \le a$, thanks to $x\notin R_b$. This readily implies $R_b\in\Ec$. With $R_b\in\Lc$ by construction, we conclude $R_b\in\tEc$. 
\end{proof}

\begin{Proposition}\label{prop:R_a}
If $a\le \sqrt{2} a^*$, with $a^*>0$ as in Lemma~\ref{lemma.eg.BM}, then $R_{a}\in\tEc$ is optimal among $\tEc$. 
\end{Proposition}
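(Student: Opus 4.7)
The plan is to first confirm $R_a \in \tEc$ via Lemma~\ref{lemma.eg.BM}, then compare $R_a$ against the canonical optimal equilibrium delivered by Theorem~\ref{thm.optimal}, exploiting the fact that the payoff $f$ in \eqref{f with a} is saturated at its maximum value $a$ everywhere on $R_a$.

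First, I would apply Lemma~\ref{lemma.eg.BM} with $b = a$ (permissible since the hypothesis $a \le \sqrt{2}a^*$ gives $a = a \wedge \sqrt{2}a^*$) to obtain $R_a \in \tEc$. Then Theorem~\ref{thm.optimal} furnishes an optimal equilibrium $R_\star \in \tEc$ with $\overline{R_\star} = \bigcap_{R \in \tEc} \overline R$. Since $R_a$ is closed and already lies in $\tEc$, this forces $R_\star \subseteq \overline{R_\star} \subseteq \overline{R_a} = R_a$, so $\rho_{R_a} \le \rho_{R_\star}$ $\P^x$-a.s., and consequently $\delta(\rho_{R_a}) \ge \delta(\rho_{R_\star})$ $\P^x$-a.s.\ by monotonicity of $\delta$.

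Next, I would exploit the fact that $f \equiv a$ on $R_a$. Because $R_\star$ is finely closed, Remark~\ref{rem:X_rhoA in A^*} gives $X_{\rho_{R_\star}} \in R_\star^* = R_\star \subseteq R_a$ on $\{\rho_{R_\star} < \infty\}$, hence $f(X_{\rho_{R_\star}}) = a$ there; likewise $f(X_{\rho_{R_a}}) = a$ on $\{\rho_{R_a} < \infty\}$ by closedness of $R_a$. Using \eqref{old asm} to discard the $\{\rho = \infty\}$ contribution, one obtains
\[
J(x, R_\star) = a \cdot \E^x\!\left[\delta(\rho_{R_\star}) \mathbf{1}_{\{\rho_{R_\star} < \infty\}}\right] \le a \cdot \E^x\!\left[\delta(\rho_{R_a}) \mathbf{1}_{\{\rho_{R_a} < \infty\}}\right] = J(x, R_a), \qquad \forall x \in \R^2.
\]
Combining this with $V(x, R_\star) \ge V(x, T)$ for all $T \in \tEc$, I conclude $V(x, R_a) \ge V(x, R_\star) \ge V(x, T)$ for all $T \in \tEc$ and all $x \in \R^2$, establishing optimality of $R_a$ among $\tEc$.

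I do not foresee a serious obstacle here: once the inclusion $R_\star \subseteq R_a$ is extracted from the characterization $\overline{R_\star} = \bigcap_{R \in \tEc}\overline R$ in Theorem~\ref{thm.optimal}, the saturation of $f$ at $a$ throughout $R_a$ makes the monotonicity comparison essentially automatic. The only minor care point is handling the event $\{\rho = \infty\}$, which \eqref{old asm} dispatches by forcing $\delta(t) f(X_t) \to 0$.
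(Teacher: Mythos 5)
Your proof is correct. It shares the paper's skeleton — apply Lemma~\ref{lemma.eg.BM} to get $R_a\in\tEc$, invoke Theorem~\ref{thm.optimal} to obtain an optimal $R_\star\in\tEc$, and deduce $R_\star\subseteq \overline{R_\star}\subseteq\overline{R_a}=R_a$ — but the final step differs. The paper argues by contradiction: if $R_a\setminus R_\star\ne\emptyset$, pick $x$ there; fine closedness of $R_\star$ gives $\rho_{R_\star}>0$ $\P^x$-a.s., and since $f\equiv a$ on $R_a\supseteq R_\star$ one gets $J(x,R_\star)=\E^x[a/(1+\beta\rho_{R_\star})]<a=f(x)$, so $x\in S(R_\star)\setminus R_\star$, violating $\Theta(R_\star)=R_\star$; hence $R_\star=R_a$. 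You instead prove a direct domination: from $\rho_{R_a}\le\rho_{R_\star}$, monotonicity of $\delta$, and the saturation $f(X_{\rho_{R_\star}})=f(X_{\rho_{R_a}})=a$ on the respective finite-hitting events (justified via Remark~\ref{rem:X_rhoA in A^*} and fine closedness of $R_\star$), you get $J(x,R_a)\ge J(x,R_\star)$ for all $x$, hence $V(\cdot,R_a)\ge V(\cdot,R_\star)\ge V(\cdot,T)$ for every $T\in\tEc$. Both routes are valid and of similar length; the paper's buys the slightly sharper conclusion $R_\star=R_a$, while yours is a cleaner monotone comparison. It is worth noting that your inequality $J(x,R_a)\ge J(x,R_\star)$ runs opposite to the general estimate from Lemma~\ref{lemma.smaller.eq} (which gives $J(x,R_\star)\ge J(x,R_a)$ since $R_\star\subseteq R_a$ with $R_\star\in\Ec$); combining the two actually forces $J(\cdot,R_\star)=J(\cdot,R_a)$ and thereby recovers the paper's conclusion, but either inequality alone suffices for optimality of $R_a$.
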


\begin{proof}
By Theorem \ref{thm.optimal}, there exists $R_\star\in\tEc$ that is optimal among $\tEc$ and satisfies ${R_\star} \subseteq \bigcap_{R\in \tEc}\Rb$. As $a\le \sqrt{2} a^*$, Lemma \ref{lemma.eg.BM} implies $R_{a}\in \tilde{\Ec}$.  Since $R_a$ is closed by definition, we obtain $R_\star\subseteq R_a$. Suppose that $R_{a}\setminus R_\star\neq \emptyset$. For any $x=(x_1,x_2)\in R_{a}\setminus R_\star$, since $R_\star$ is finely closed, $\rho_{R_\star}>0$ $\P^x$-a.s. It follows that
\bee
J(x, R_\star)=\E^x\bigg[\frac{a}{1+\beta \rho_{R_\star}}\bigg]< a= |x_1-x_2|\wedge a = f(x),
\eee
where the first equality follows from $f\equiv a$ on $R_a\supseteq R_\star$ (recall \eqref{f with a}) and the second equality is due to $x\in R_a$. This shows that $R_\star$ is not an equilibrium, a contradiction. Hence, we must have $R_\star=R_a$, and thus $R_a$ is optimal among $\tEc$. 
\end{proof}

Proposition~\ref{prop:R_a} hinges on the one-dimensional analysis in Lemma~\ref{lemma.eg.BM} and the multi-dimensional result Theorem~\ref{thm.optimal}. While Lemma~\ref{lemma.eg.BM} already constructed a special collection of equilibria in $\R^2$ that are in a sense ``one-dimensional'' (as they only depends on the value $x_1-x_2$), it is unclear whether one should continue to study equilibria of more general forms in order to find an optimal equilibria. Theorem~\ref{thm.optimal} comes into play here, asserting that an optimal equilibrium exists and must be contained in the intersection of all the ``one-dimensional'' equilibria, which is $R_a$. Then, by the definition of $f$ in \eqref{f with a}, there is no way for an equilibrium to be properly contained in $R_a$,  leading to the only possibility that $R_a$ is an optimal equilibrium. 

\begin{Remark}
If $a> \sqrt{2} a^*$, an explicit characterization of an optimal equilibrium remains obscure. First, Lemma~\ref{lemma.eg.BM} now ensures  $R_{\sqrt{2} a^*}$, but not $R_a$, to be an equilibrium. Once we replace $R_a$ by $R_{\sqrt{2} a^*}$ in the proof of Proposition~\ref{prop:R_a}, we realize that the argument does not work anymore, as ``$f\equiv \sqrt{2} a^*$ on $R_{\sqrt{2} a^*}$'' fails to hold under the definition of $f$ in \eqref{f with a} and $a> \sqrt{2} a^*$. It is of interest as future research to investigate how an optimal equilibrium can be explicitly characterized for the case $a> \sqrt{2} a^*$. 
\end{Remark}


\section{An Example of $R\in \tEc$ but $\overline{R}\notin\Ec$}\label{sec:eg}
In this section, we take $X$ to be a three-dimensional Brownian motion, and will construct an example that explicitly demonstrates  $R\in\tEc$ but $\overline R\notin\Ec$. To this end, we need the following technical result, whose proof is relegated to Appendix~\ref{sec:proof of lemma.eg}.

\begin{Lemma}\label{lemma.eg}
Let $X$ be a three-dimensional Brownian motion. 
Given an open domain $D \subseteq \R^3$, suppose that $f\leq K$ 
on $\partial D$ for some $K>0$. Then, 
\bi
\item [(i)] 
for any $x\in D$ and $r>0$ such that $B(x, r):=\{y\in\R^3: \|y-x\|< r\}\subseteq D$, 
\begin{equation}\label{eq.eg.lemma0}
\frac{k(r)}{m(B(x, r))}\int_{B(x, r)} J(y, D^c) m(dy)\leq J(x, D^c)\leq \frac{1}{m(B(x, r))} \int_{B(x, r)} J(y, D^c) m(dy),
\end{equation}
where $k(r):=\E^x[\delta(\rho_{B(x, r)^c})]$ is continuous and nonincreasing in $r$ with $\lim_{r\downarrow 0}k(r)=1$, and $m(\cdot)$ denotes the Lebesgue measure in $\R^3$.
\item[(ii)] $x\mapsto J(x, D^c)$ is continuous on $D$. Furthermore, if $z\in\partial D$ is regular to $D^c$, then $x\mapsto J(x, D^c)$ is also continuous at $z$, in the sense that
\begin{equation}\label{conti at z}
\lim_{x\rightarrow z,\ x\in{D}} J(x, D^c)=J(z, D^c).
\end{equation}
\ei
\end{Lemma}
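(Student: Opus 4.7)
To prove (i), I would exploit two symmetry features of three-dimensional Brownian motion. By rotational symmetry, for any ball $B(x,s)\subseteq D$ with $s<r$, the exit time $\sigma_s:=\rho_{B(x,s)^c}$ and the exit location $X_{\sigma_s}$ are independent under $\P^x$, with $X_{\sigma_s}$ uniformly distributed on $\partial B(x,s)$; moreover $\overline{B(x,s)}\subseteq B(x,r)\subseteq D$ forces $\sigma_s<\rho_{D^c}$ $\P^x$-a.s. Write $M_s(x):=\E^x[J(X_{\sigma_s},D^c)]$ for the spherical average. Splitting $\rho_{D^c}=\sigma_s+(\rho_{D^c}-\sigma_s)$, the DI inequality $\delta(a)\delta(b)\le\delta(a+b)$ combined with the strong Markov property at $\sigma_s$ and the independence above yields
\bee
J(x,D^c)\ge \E^x[\delta(\sigma_s)J(X_{\sigma_s},D^c)]=k(s)M_s(x),
\eee
while the monotonicity $\delta(\rho_{D^c})\le\delta(\rho_{D^c}-\sigma_s)$ combined with the strong Markov property (without the $\delta(\sigma_s)$ factor) gives $J(x,D^c)\le M_s(x)$. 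Polar-coordinate integration, $\int_{B(x,r)}J(y,D^c)\,m(dy)=\int_0^r 4\pi s^2 M_s(x)\,ds$, combined with these two bounds and the monotonicity of $k$ (so that $1/k(s)\le 1/k(r)$ on $(0,r]$) then produces \eqref{eq.eg.lemma0}. The stated properties of $k(r)$ follow from pathwise continuity of $X$, continuity and monotonicity of $\delta$, and dominated convergence applied to $\sigma_r\downarrow 0$.

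For (ii) at an interior point $x_0\in D$, I would first observe that $J(\cdot,D^c)\le K$: on $\{\rho_{D^c}<\infty\}$, $X_{\rho_{D^c}}\in\partial D$ by path-continuity and the openness of $D$, so $f(X_{\rho_{D^c}})\le K$, while on $\{\rho_{D^c}=\infty\}$ the integrand $\limsup_t\delta(t)f(X_t)$ is controlled by \eqref{assum.bound}. Writing $N_r(y):=m(B(y,r))^{-1}\int_{B(y,r)}J(z,D^c)\,m(dz)$, part (i) gives $|J(y,D^c)-N_r(y)|\le K(1/k(r)-1)$ whenever $B(y,r)\subseteq D$, while the uniform bound $J\le K$ gives $|N_r(x_0)-N_r(y)|\le K\cdot m(B(x_0,r)\triangle B(y,r))/m(B(x_0,r))\to 0$ as $y\to x_0$. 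Choosing $r$ small first to make $K(1/k(r)-1)$ small and then $y$ close to $x_0$, the triangle inequality yields continuity of $J(\cdot,D^c)$ on $D$.

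For continuity at a regular $z\in\partial D$, regularity gives $\rho_{D^c}=0$ $\P^z$-a.s., hence $J(z,D^c)=f(z)$. The key input from probabilistic potential theory is that $x\mapsto\E^x[e^{-\rho_{D^c}}]$ is $1$-excessive for the closed set $D^c$, hence lower semicontinuous; since it equals $1$ at $z$ and is bounded by $1$, it must converge to $1$ as $x\to z$. The Chebyshev-type bound $1-\E^x[e^{-\rho_{D^c}}]\ge(1-e^{-t})\P^x(\rho_{D^c}>t)$ then delivers $\P^x(\rho_{D^c}>t)\to 0$ as $x\to z$ for every $t>0$. I would next split $J(x,D^c)$ according to $\{\rho_{D^c}\le t\}$ versus its complement: on the first event $\|X_{\rho_{D^c}}-z\|\le\sup_{s\le t}\|X_s-x\|+\|x-z\|$, small in probability by BM path-continuity at $0$ together with $\|x-z\|\to 0$, while $\delta(\rho_{D^c})\in[\delta(t),1]$, so continuity of $f$ at $z$ and of $\delta$ at $0$ force the integrand close to $f(z)$; on the complement, the vanishing probability combined with uniform integrability via \eqref{assum.bound} kills the contribution. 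Choosing $t$ small first and then $x$ close to $z$ yields \eqref{conti at z}. The main obstacle is precisely this last step: one must upgrade the pointwise regularity of $z$ into a locally uniform estimate on the hitting-time distribution, and the cleanest route is through the semicontinuity of the $1$-potential of $D^c$, a standard but non-trivial fact about Brownian motion as a Hunt process.
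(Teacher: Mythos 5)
Your part (i) and the interior-continuity half of part (ii) coincide with the paper's argument step for step: the spherical-mean inequalities via DI and monotonicity of $\delta$, polar integration over $s\in(0,r]$ using $k(s)\ge k(r)$, and then the symmetric-difference estimate between ball averages combined with the uniform bound $J(\cdot,D^c)\le K$ (which holds since $X_{\rho_{D^c}}\in\partial D$ on $\{\rho_{D^c}<\infty\}$ and $\delta(\infty)f(X_\infty)$ is handled by \eqref{assum.bound}). These portions are essentially identical to the paper.

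For continuity at a regular boundary point $z$ you take a parallel but slightly different path. The paper invokes lower semicontinuity of $x\mapsto\P^x(\rho_{D^c}\le\eta)$ directly (via Chung's Proposition 1, Section 4.4), combines it with translation invariance of the ball exit time, and then splits on the event $\{\rho_{D^c}<\rho_{B(x,r)^c}\}$; the advantage of this split is that on that event the exit location satisfies $\|X_{\rho_{D^c}}-x\|<r$ \emph{deterministically}, so continuity of $f$ and $\delta$ applies pathwise. You instead derive the key input $\P^x(\rho_{D^c}>t)\to 0$ from LSC of the $1$-potential $\E^x[e^{-\rho_{D^c}}]$ (a standard excessivity/strong-Feller fact for Brownian motion) followed by a Chebyshev bound, and then split on the time cutoff $\{\rho_{D^c}\le t\}$, controlling $\|X_{\rho_{D^c}}-z\|$ only in probability via $\sup_{s\le t}\|X_s-x\|$. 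This works, and the order of limits you state (choose $t$ small to kill oscillation, then $x\to z$ to kill $\P^x(\rho_{D^c}>t)$ for that fixed $t$) is correct, but it requires one further splitting of the event $\{\rho_{D^c}\le t\}$ according to whether $\sup_{s\le t}\|X_s-x\|$ exceeds a small threshold—an extra layer of bookkeeping that the paper's ball-exit cutoff avoids by making the localization of $X_{\rho_{D^c}}$ almost sure on the relevant event. Both routes are correct; the paper's is marginally tighter and also avoids appealing to excessivity, replacing it with the more elementary LSC of $x\mapsto\P^x(\rho_{D^c}\le\eta)$.
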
 

Now, let $\mathcal S$ be the collection of $(x_1, x_2, x_3)\in \R^3$ such that
\bee
\begin{aligned}
x_1=\sqrt{x_2^2+x_3^2}, \quad &\text{for}\; x_1\geq 0,\ 0\leq x_2^2+x_3^2\leq 1;\\
x_1=2-\sqrt{x_2^2+x_3^2}, \quad &\text{for}\; x_1\geq 0,\ 1\leq x_2^2+x_3^2\leq 4;\\
x_2^2+x_3^2=4, \quad &\text{for}\; x_1<0.
\end{aligned}
\eee
As shown in Figure \ref{egpic}, $\mathcal S\subset \R^3$ is the surface generated by rotating the curve $l$ around the $x_1$-axis. It partitions $\R^3$ into two open sets $G_1$ and $G_2$: $G_1$ contains $(1, 0, 0 )$, $G_2$ contains $(-1,0,0)$, and $\partial G_1 =\partial G_2=\mathcal S$. 
Note that $\overline{G_1} = G_1\cup \mathcal S$ is a so-called {\it Lebesgue thorn} with the origin $O := (0,0,0)$ being its vertex. By \cite[Example 8.40]{peter2010}, 
\begin{equation}\label{O irregular}
\hbox{$O$ is not regular to either $G_1$ or $\overline{G_1}$, while all other points in $\mathcal S$ are regular to $G_1$}. 
\end{equation}

\begin{figure}[h!]
\centering
\includegraphics[width=11cm]{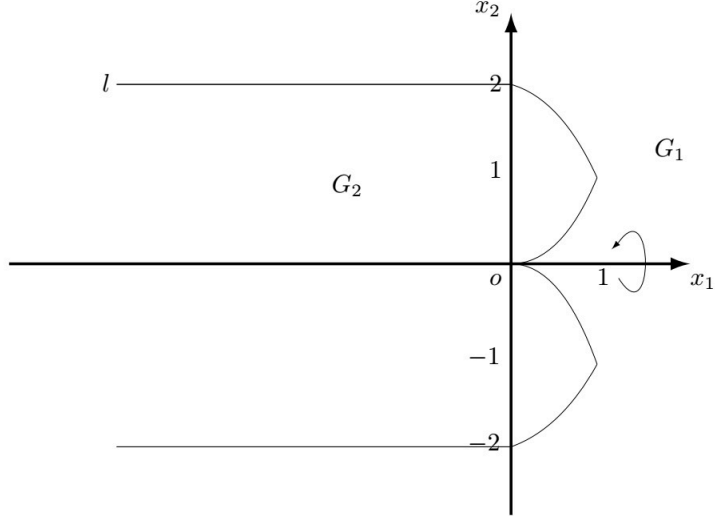}
\caption{Surface $S$ is generated by rotating the curve $\ell$ around the $x_1$-axis}\label{egpic}
\centering
\end{figure}
Define $h_1:\R^3\to\R_+$ by
\begin{align}
h_1(x):=
\begin{cases}
\sqrt{x_1^2+x_2^2+x_3^2},& \quad \text{for}\; x_1^2+x_2^2+x_3^2\leq 1,\\
1/\sqrt{x_1^2+x_2^2+x_3^2},& \quad \text{otherwise}.
\end{cases}
\end{align}
Note that $0<h_1\leq 1$ on $\mathcal S$ except at the origin $O$, where $h_1(O)=0$. Then, we introduce
\begin{equation}
h_2(x):=\E^x[\delta(\rho_{\overline{G_1}})h_1(X_{\rho_{\overline{G_1}}})],
\quad \forall x\in \R^3.
\end{equation}
Now, we define the payoff function $f$ by
\begin{align}\label{f eg}
f(x):=
\begin{cases}
 h_1(x), &\quad x\in \overline{G_1},\\
\min\{h_1(x), h_2(x)\}, &\quad x\in G_2.
\end{cases}
\end{align}

\begin{Lemma}
$f$ in \eqref{f eg} is continuous on $\R^3$, and satisfies \eqref{assum.bound} and \eqref{old asm} for all $x\in\R^3$.
\end{Lemma}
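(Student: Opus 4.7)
\emph{Plan.} First I would verify that $h_1$ is continuous on $\R^3$: the two defining pieces agree at $|x|=1$ (both equal $1$), so $h_1\in C(\R^3)$ with $0\le h_1\le 1$ and $h_1(O)=0$. Next, I would establish continuity of $h_2$ on the open set $G_2$ together with the one-sided boundary continuity $\lim_{x\to z,\,x\in G_2} h_2(x) = h_2(z)$ at every $z\in \mathcal S\setminus\{O\}$. This is exactly the content of Lemma~\ref{lemma.eg}(ii), once its proof is reread with the bounded continuous function $h_1$ playing the role of the payoff integrated at the first hitting time of $\overline{G_1}=(G_2)^c$; since the argument there only uses boundedness and continuity of the integrand, the adaptation is immediate. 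By \eqref{O irregular}, every $z\in\mathcal S\setminus\{O\}$ is regular to $G_1$, hence to $\overline{G_1}$, so $\rho_{\overline{G_1}}=0$ $\P^z$-a.s.\ and therefore
\[
\lim_{x\to z,\,x\in G_2} h_2(x) \;=\; h_2(z) \;=\; \E^z\!\left[\delta(0)\,h_1(X_0)\right] \;=\; h_1(z), \qquad z\in\mathcal S\setminus\{O\}.
\]

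With this in hand I would check continuity of $f$ case by case. On the open set $G_1$ (the interior of $\overline{G_1}$) we have $f=h_1$, which is continuous; on the open set $G_2$ we have $f=\min\{h_1,h_2\}$, continuous as a minimum of two continuous functions. For $z\in\mathcal S\setminus\{O\}$, approaches from within $\overline{G_1}$ give $f(x_n)=h_1(x_n)\to h_1(z)=f(z)$, while approaches from within $G_2$ give $f(x_n)=\min\{h_1(x_n),h_2(x_n)\}\to \min\{h_1(z),h_1(z)\}=h_1(z)=f(z)$ by the display above. For the vertex $z=O$, no information about $h_2$ is needed: $f(O)=h_1(O)=0$, and a direct inspection of \eqref{f eg} yields $0\le f\le h_1$ everywhere; by continuity of $h_1$ at $O$ with $h_1(O)=0$, the squeeze $0\le f(x)\le h_1(x)\to 0$ delivers $f(x)\to f(O)$.

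Finally, $0\le f\le 1$ globally (since $h_1\le 1$ and $f\le h_1$), so both \eqref{assum.bound} and \eqref{old asm} are trivial: $\sup_{t\in[0,\infty]}\delta(t)f(X_t)\le 1$ is $\P^x$-integrable, and $0\le \delta(t)f(X^x_t)\le \delta(t)\to 0$ as $t\to\infty$ by the assumed properties of $\delta$. The only genuine technical point is the invocation of Lemma~\ref{lemma.eg}(ii) with $h_1$ replacing the paper's payoff $f$; the rest is bookkeeping. The reason the construction works is structural: the pathological irregular point $O$ is precisely where $h_1$ vanishes, so any potential misbehavior of $h_2$ near $O$ is automatically masked by the squeeze against $h_1$, obviating the need to analyze $h_2$ at the vertex.
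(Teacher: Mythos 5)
Your proof is correct and follows essentially the same route as the paper: continuity of $h_1$, continuity of $h_2$ away from $O$ via Lemma~\ref{lemma.eg}(ii) plus the matching $h_2=h_1$ at regular boundary points, and a squeeze $0\le f\le h_1$ to close the gap at the single irregular point $O$; integrability is trivial since $0\le f\le 1$. Your remark that Lemma~\ref{lemma.eg}(ii) is being invoked with $h_1$ as the payoff is a fair point of care, but it is actually automatic here: since $f\equiv h_1$ on $\overline{G_1}$ and $X_{\rho_{\overline{G_1}}}\in\overline{G_1}$, one has $h_2(x)=J(x,\overline{G_1})$ exactly as the lemma is stated.
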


\begin{proof}
As $0\le h_1\le 1$ on $\R^3$, we have $0\le f\le 1$ on $\R^3$ by definition. Hence, \eqref{assum.bound} and \eqref{old asm} are trivially satisfied. Clearly, $h_1$ is continuous on $\R^3$. By \eqref{O irregular}, we conclude from Lemma \ref{lemma.eg} that $h_2$ is continuous on $\overline{G_2}\setminus \{O\}$. Moreover, since every point in $\mathcal S\setminus \{O\}$ is regular to $G_1$, by definition $h_2=h_1$ on $\mathcal S\setminus \{O\}$. All this readily implies that $f$ is continuous on $\R^3\setminus \{O\}$. Finally, observe that $0 \leq f(x) \leq h_1(x)$ for all $x\in \R^3\setminus\{O\}$.  This, along with $\lim_{x\to O} h_1(x) = h_1(O)=0$, entails $\lim_{x\to O} f(x) =0 = f(O)$. We therefore conclude that $f$ is continuous on $\R^3$.
\end{proof}

By \eqref{O irregular}, the fine closure of $G_1$ is 
\begin{equation}\label{G_1^*}
G_1^*=G_1\cup (\mathcal S\setminus \{O\}).
\end{equation}

\begin{Proposition}\label{prop.eg}
$G_1^*\in\tEc$ but $\overline{G_1^*}\notin\Ec$.
\end{Proposition}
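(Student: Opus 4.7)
The argument splits into verifying $G_1^*\in\tEc$ and then exhibiting a single point of $\overline{G_1}$ that violates the equilibrium condition. The key geometric input is \eqref{O irregular} (the Lebesgue-thorn irregularity of $O$ to $G_1$, with every other point of $\mathcal S$ regular), together with the three-dimensional fact that singletons are polar for Brownian motion.

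To obtain $G_1^*\in\Lc$, note that $G_1^*=(G_1)^*$ is finely closed by Remark~\ref{rem:fine topology}, and $\overline{G_1^*}\setminus G_1^*=\{O\}$ is polar (hence semipolar) for three-dimensional Brownian motion. For $G_1^*\in\Ec$ I would verify the two inclusions defining $\Theta(G_1^*)=G_1^*$ separately. For $x\in G_1^*$, either $x$ lies in the open set $G_1$ or in $\mathcal S\setminus\{O\}$; in both cases $x$ is regular to $G_1$, hence by Lemma~\ref{lemma.equiv} regular to $G_1^*$, so $J(x,G_1^*)=f(x)$ and $x\in I(G_1^*)\cap G_1^*$. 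For $x\notin G_1^*$ I need $J(x,G_1^*)\ge f(x)$. The case $x=O$ is trivial since $J(O,G_1^*)\ge0=h_1(O)=f(O)$. For $x\in G_2$, path continuity places $X_{\rho_{\overline{G_1}}}$ on $\mathcal S$ on the event $\{\rho_{\overline{G_1}}<\infty\}$; polarity of $\{O\}$ from $x\ne O$ and regularity of every point of $\mathcal S\setminus\{O\}$ to $G_1$ then give $\rho_{G_1^*}=\rho_{G_1}=\rho_{\overline{G_1}}$ $\P^x$-a.s.\ with $X_{\rho_{G_1^*}}\in\mathcal S\setminus\{O\}$ a.s., where $f=h_1$. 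Consequently $J(x,G_1^*)=h_2(x)\ge \min\{h_1(x),h_2(x)\}=f(x)$, as desired.

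To show $\overline{G_1^*}=\overline{G_1}\notin\Ec$, I would prove the strict inequality $J(O,\overline{G_1})>0=f(O)$, which places $O$ in $C(\overline{G_1})$, hence outside $\Theta(\overline{G_1})=S(\overline{G_1})\cup(I(\overline{G_1})\cap\overline{G_1})$, while $O\in\overline{G_1}$. Irregularity of $O$ to $\overline{G_1}$ yields $\rho_{\overline{G_1}}>0$ $\P^O$-a.s. Polarity of $\{O\}$ for three-dimensional Brownian motion, applied now under $\P^O$, together with path continuity puts $X_{\rho_{\overline{G_1}}}\in\mathcal S\setminus\{O\}$ a.s.\ on $\{\rho_{\overline{G_1}}<\infty\}$, where $f=h_1>0$. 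Finally, positivity of $\P^O(\rho_{\overline{G_1}}<\infty)$ follows by picking any $y\in G_1$ and noting that the Gaussian density gives $\P^O(X_1\in B(y,\varepsilon))>0$ for small $\varepsilon$, on which event $\rho_{\overline{G_1}}\le 1$. Combining these facts upgrades a non-negative integrand to a strictly positive expectation. The main technical obstacle is ensuring that $X$ under $\P^O$ does not re-enter $\overline{G_1}$ solely through $O$; this requires polarity of singletons for three-dimensional Brownian motion, a genuinely multi-dimensional fact with no one-dimensional analogue, and it is precisely what drives the gap between $G_1^*$ and its Euclidean closure.
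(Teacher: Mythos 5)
Your proof is correct and follows essentially the same route as the paper's: verify $G_1^*\in\Lc$ via polarity of $\{O\}$, show $G_1^*\subseteq I(G_1^*)$ and $S(G_1^*)=\emptyset$ using regularity of all points of $G_1^*$ to $G_1$ together with $J(\cdot,G_1^*)=h_2(\cdot)\ge f$ on $G_2\cup\{O\}$, and then exhibit $O\in C(\overline{G_1^*})\cap\overline{G_1^*}$ via the strict inequality $J(O,\overline{G_1^*})>0=f(O)$. Your explicit justification that $\P^O(\rho_{\overline{G_1}}<\infty)>0$ (via the Gaussian density landing $X_1$ in a small ball inside $G_1$) is a small but welcome addition, as the paper leaves this positivity implicit when concluding $J(O,G_1^*)>0$.
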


\begin{proof}
By \eqref{O irregular} and $G_1$ being open, every point in $G_1^*$ is regular to $G_1^*$. Hence, 
$J(x, G_1^*)=\E^x[\delta(\rho_{G_1^*})f(X_{\rho_{G_1^*}})]=f(x)$ for all $x\in G_1^*$. 
For any $x\in(G_1^*)^c= G_2\cup\{O\}$, \eqref{O irregular} implies $X_{\rho_{G_1^*}} \in \mathcal S\setminus \{O\}$ $\P^x$-a.s. Hence, by the fact that $f=h_1$ on $\mathcal S$,
\begin{equation}\label{at G_2}
J(x,G_1^*)=\E^x[\delta(\rho_{G_1^*})f(X_{\rho_{G_1^*}})]= \E^x[\delta(\rho_{G_1^*})h_1(X_{\rho_{G_1^*}})]=h_2(x)\quad\forall x\in G_2\cup\{O\}, 
\end{equation}
where the last equality follows from the fact that $G_1^*$ and $\overline{G_1}$ only differ by the singleton $\{O\}$, which is polar (with respect to $X$, a three-dimensional Brownian motion). By the definition of $f$, this readily implies $J(x,G^*_1)=h_2(x)\ge f(x)$ for $x\in G_2$. Note that $X_{\rho_{G_1^*}} \in \mathcal S\setminus \{O\}$ also implies $h_1(X_{\rho_{G_1^*}})>0$. We then deduce from \eqref{at G_2} that $J(x,G^*_1)>0$ for all $x\in G_2\cup\{O\}$. In particular, 
\begin{equation}\label{at O}
J(O,G^*_1)> 0=h_1(O)=f(O).
\end{equation}
Therefore, we conclude that $G_1^*\subseteq I(G_1^*)$ and $S(G_1^*)=\emptyset$. In view of \eqref{Theta}, $\Theta(G_1^*)= G_1^*$, i.e. $G_1^*\in\Ec$. In addition, since $\{O\}$ is polar and $\overline{G^*_1} \setminus G_1^*=\overline{G_1} \setminus G_1^*= \{O\}$, we have $G_1^*\in \Lc$. Hence, $G_1^*\in\Ec\cap\Lc=\tEc$. Finally, by \eqref{O irregular} and \eqref{at O}, $J(O,\overline{G_1^*}) = J(O,{G_1^*}) >f(O)$, i.e. $O\in C(\overline{G_1^*})$. As $\overline{G_1^*} = G_1^*\cup\{O\}$ intersects $C(\overline{G_1^*})$, we conclude $\overline{G_1^*}\notin\Ec$. 
\end{proof}

\begin{Remark}\label{rem:by-product}
As a by-product, results in this section provide a concrete demonstration of Proposition~\ref{prop.decrease.theta}. In view of the proof of Proposition~\ref{prop.eg}, it can be checked directly that 
$
\Theta(\overline{G_1^*})={G_1^*}\subseteq \overline{G_1^*}.
$
By Proposition~\ref{prop.decrease.theta}, this implies $G_1^*=\Theta(\overline{G_1^*})\in\Ec$, which is verified by Proposition~\ref{prop.eg}. More specifically, $\overline{G_1^*}$ and $\Theta(\overline{G_1^*})$ only differ by the origin $O$ (which is not regular to $\overline{G_1^*}$), and it is the inclusion of $O$ that prevents  $\overline{G_1^*}$ from being an equilibrium. Applying $\Theta$ to $\overline{G_1^*}$ precisely removes this problematic point, turning $\overline{G_1^*}$ into the equilibrium $\Theta(\overline{G_1^*})= \overline{G_1^*}\setminus \{O\} = G_1^*$. 
\end{Remark}


    


\appendix

\section{Proof of Lemma~\ref{lem:density}}\label{sec:proof of lem:density}
Fix $R\in\B$. If $R$ is of zero potential, then 
\begin{equation}\label{U^alpha<}
 \int_0^\infty e^{-\alpha t}P_t(x, R)dt \le  \int_0^\infty P_t(x, R)dt= U(x,R)=0\quad \forall x\in\R^d.
\end{equation}
Hence, $\int_0^\infty e^{-\alpha t}P_t(x, R)dt=0$ for all $x\in\R^d$, and thus $\lambda^{\alpha}(R)=0$. Conversely, if $\lambda^{\alpha}(R)=0$, then 
\[
\lambda(E) =0\quad \hbox{with}\quad E:= \{x\in\R^d:U^\alpha(x,R) >0\},
\]
where we use the notation
\begin{equation}\label{U^alpha}
U^\alpha(x,R) := \int_0^\infty e^{-\alpha t} P_t(x,R)dt. 
\end{equation}
Since there exists a transition density $(p_t)_{t\ge 0}$ for $X$, it follows from \eqref{density} and $\lambda(E) =0$ that
\begin{equation*}
U(x,E)=\int_0^\infty P_t(x, E)dt=\int_0^\infty  \int_E p_t(x, y)\lambda(dy) dt =\int_0^\infty  0 dt=0\quad \forall x\in\R^d.
\end{equation*}
By the same argument as in \eqref{U^alpha<}, this implies
\begin{equation}\label{UE=0}
U^\alpha(x,E)= 0\quad \hbox{for all $x\in\R^d$}.
\end{equation} 
Now, we claim that $U^\alpha(x,R)= 0$ for all $x\in\R^d$. Let us write $R=R_1\cup R_2$, where
\[
R_1 := \{x\in R: U^\alpha(x,R)>0\} \quad\hbox{and}\quad R_2 := \{x\in R: U^\alpha(x,R)=0\}.
\]  
As $R_1\subseteq E$, we have $U^\alpha(x,R_1)\le U^\alpha(x,E)$ for all $x\in\R^d$. Then, \eqref{UE=0} entails $U^\alpha(x,R_1)=0$ for all $x\in\R^d$, so that 
\begin{equation}\label{A=A_2}
U^\alpha(x,R)=U^\alpha(x,R_2)\quad \forall x\in\R^d.
\end{equation}
By contradiction, suppose that there exists $x^*\in\R^d$ such that $U^\alpha(x^*,R)> 0$. Observe that
\begin{align*}
U^\alpha(x^*,R_2) &=\E^{x^*}\left[\int_0^\infty e^{-\alpha t} 1_{R_2}(X_t) dt\right] = \E^{x^*}\bigg[\int_{\rho_{R_2}}^\infty e^{-\alpha t} 1_{R_2}(X_t) dt\bigg]\\
&= \E^{x^*}\left[e^{-\alpha\rho_{R_2}} U^\alpha(X_{\rho_{R_2}},R_2) \right]  = \E^{x^*}\left[e^{-\alpha\rho_{R_2}} U^\alpha(X_{\rho_{R_2}},R) \right]=0,
\end{align*}
where the first equality stems from the definition of $U^\alpha$ in \eqref{U^alpha}, the fourth equality follows from \eqref{A=A_2},  and the last equality is due to $U^\alpha(X_{\rho_{R_2}},R)=0$ (by the definition of $R_2$). This contradicts  $U^\alpha(x^*,R_2)= U^\alpha(x^*,R)>0$. Hence, we have $U^\alpha(x,R)= 0$ for all $x\in\R^d$. Thanks again to Proposition 3 in \cite[Section 3.5]{chung2006markov}, this implies $U(x,R)= 0$ for all $x\in\R^d$, i.e. $R$ is of zero potential.


\section{Proof of Lemma~\ref{lemma.eg}}\label{sec:proof of lemma.eg}
(i) Fix $x\in D$. As $X$ is a three-dimensional Brownian motion and $\delta$ is continuous and nonincreasing, $r\mapsto k(r):=\E^x[\delta(\rho_{B(x, r)})]$ is continuous and nonincreasing in $r$, and $\rho_{B(x, r)}\downarrow 0$ as $r\downarrow0$ $\P^x$-a.s. By the dominated convergence theorem and $\delta(0)=1$, we get $\lim_{r\downarrow 0} k(r)=1$. Now, fix $r>0$ such that $B(x,r) \subseteq D$.
For any $0<s\le t$, by \eqref{DI} we get 
\begin{align}\label{eq.eg.lemma00}
J(x, D^c)&=\E^x[\delta(\rho_{D^c}) f(X_{\rho_{D^c}})]\notag\\
&\geq  \E^x[\delta(\rho_{B(x, s)^c}) \E^{x}[\delta(\rho_{D^c}-\rho_{B(x,s)^c}) f(X_{\rho_{D^c}}) \mid \Fc_{\rho_{B(x,s)^c}} ]  ]\notag\\
&= \E^x[\delta(\rho_{B(x, s)^c}) J(X_{\rho_{B(x,s)^c}}, D^c) ] .
\end{align}
As $X$ is a three-dimensional Brownian motion, $X^x_{\rho_{B(x,s)^c}}$ and $\rho_{B(x, s)^c}$ are independent, and $X^x_{\rho_{B(x,s)^c}}$ is uniformly distributed on $\partial B(x,s)$. Thus, \eqref{eq.eg.lemma00} implies
\begin{equation}\label{eq.eg.lemma1}
J(x, D^c)\geq \E^x[\delta(\rho_{B(x, s)^c})]\cdot\E^x[J(X_{\rho_{B(x,s)^c}}, D^c)]=k(s) \cdot \frac{1}{\Sigma(s)} \int_{\partial B(x, s)} J(y, D^c) \Sigma(dy),
\end{equation}
where $\Sigma(s)$ denotes the two-dimensional surface measure on $\partial B(x,s)$ in $\R^3$.
On the other hand, as $\delta$ is nonincreasing, 
\begin{equation}\label{eq.eg.lemma2}
J(x, D^c)\leq \E^x[\delta(\rho_{D^c}-\rho_{B(x,s)^c}) f(X_{\rho_{D^c}})] = \E^x[J(X_{\rho_{B(x,s)^c}}, D^c) ]= \frac{1}{\Sigma(s)} \int_{\partial B(x,s)} J(y, D^c) \Sigma(dy).
\end{equation}
Combining \eqref{eq.eg.lemma1} and \eqref{eq.eg.lemma2}, and using $k(r)\leq k(s)$ as $s\leq r$, we obtain 
\begin{equation*}
k(r) \cdot  \int_{\partial B(x, s)} J(y, D^c) \Sigma(dy)\leq \Sigma(s)J(x, D^c)\leq  \int_{\partial B(x, s)} J(y, D^c) \Sigma(dy), \quad \forall 0<s\leq r.
\end{equation*}
Integrating the above inequality with respect to $s$ from $0$ to $r$ yields
\begin{equation*}
k(r) \cdot\int_{B(x, r)} J(y, D^c) m(dy)\leq m(B(x, r))\cdot J(x, D^c)\leq \int_{ B(x, r)} J(y, D^c) m(dy),
\end{equation*}
which gives \eqref{eq.eg.lemma0}. 

(ii) First, we show that $x\mapsto J(x, D^c)$ is continuous on $D$. As $f\leq K$ on $\partial D$,  
\begin{equation}\label{<K}
J(x, D^c)\leq K \quad \forall x\in D.
\end{equation}
Fix an arbitrary $\varepsilon>0$. For any $x_1, x_2\in D$, choose $r>0$ small enough such that $B(x_i, r)\subseteq D$ for $i=1,2$ and $k(r)\geq 1-\varepsilon$. Then, by \eqref{eq.eg.lemma0},
\begin{equation*}
\frac{1-\varepsilon}{m(B(x_i, r))}\int_{B(x_i, r)} J(y, D^c) m(dy)\leq J(x_i, D^c)\leq \frac{1}{m(B(x_i, r))}\int_{B(x_i, r)} J(y, D^c) m(dy), \quad \text{for}\; i=1,2.
\end{equation*}
It follows that
\begin{align}\label{x1, x2}
J(x_1,D^c)&-J(x_2, D^c)\leq\frac{1}{m(B(x_1, r))} \int_{B(x_1, r)} J(y, D^c)m(dy)- \frac{1-\varepsilon}{m(B(x_2, r))} \int_{B(x_2, r)} J(y, D^c)m(dy)\notag\\
&\leq    \frac{1}{m(B(x_1, r))} \int_{B(x_1, r)\Delta B(x_2, r)} J(y, D^c)m(dy)+\frac{\varepsilon}{m(B(x_2, r))} \int_{B(x_2, r)} J(y, D^c) m(dy)\notag\\
&\leq  K\bigg( \frac{m(B(x_1, r)\Delta B(x_2, r))}{m(B(x_1, r))} +\varepsilon\bigg),
\end{align}
where $B(x_1, r)\Delta B(x_2, r)$ denotes the symmetric difference of $B(x_1, r)$ and $B(x_2, r)$, and the third equality is due to \eqref{<K}. By choosing $x_2$ sufficiently close to $x_1$, we can make $\frac{m(B(x_1, r)\Delta B(x_2, r))}{m(B(x_2, r))}\leq \varepsilon$, so that $J(x_1,D^c)-J(x_2, D^c)\leq 2K\varepsilon$. By switching $x_1$ and $x_2$ in \eqref{x1, x2}, we obtain the similar result that by choosing $x_2$ close enough to $x_1$, we get $J(x_2,D^c)-J(x_1, D^c)\leq 2K\varepsilon$. Hence, $|J(x_1,D^c)-J(x_2, D^c)|\leq 2K\varepsilon$ for $x_2$ sufficiently close to $x_1$. That is, $J(x, D^c)$ is continuous at $x_1$. By the arbitrariness of $x_1\in D$, $x\mapsto J(x,D^c)$ is continuous on $D$.

It remains to prove \eqref{conti at z}. Fix $z\in \partial D$ that is regular to $D^c$. By Proposition 1 in \cite[Section 4.4]{chung2006markov}, for any $\eta >0$, $x\mapsto \P^{x}( \rho_{D^c}\leq \eta)$ is lower seimicontinuous. Hence,
\bee
\liminf_{x\rightarrow z} \P^{x}( \rho_{D^c}\leq \eta)\geq \P^z(\rho_{D^c}\leq \eta)=1,
\eee
which implies
\be\label{eq.eg.lemmaii2}
\lim_{x\rightarrow z} \P^{x}(\rho_{D^c}\leq \eta)=1.
\ee
Given $r>0$, note that because $X$ is a Brownian motion,
\begin{equation}\label{constant P}
x\mapsto \P^x(\rho_{B(x, r)^c} >\eta)\quad \hbox{is constant}. 
\end{equation}
Observe that it holds for all $\eta>0$ that 
\be\label{eq.eg.lemmaii4}
\P^x(\rho_{D^c} < \rho_{B(x, r)^c} ) \ge \P^x(\rho_{D^c} \leq \eta < \rho_{B(x, r)^c} )\geq \P^x(\rho_{D^c}\leq \eta ) + \P^x(\rho_{B(x, r)^c}>\eta) -1.
\ee
By \eqref{eq.eg.lemmaii2} and \eqref{constant P}, this implies
\bee
\liminf_{x\rightarrow z} \P^x(\rho_{D^c} < \rho_{B(x, r)^c})\ge \P^z(\rho_{B(z, r)^c}>\eta),\quad \forall \eta>0. 
\eee
As $\lim_{\eta\downarrow 0} \P^z(\rho_{B(z, r)^c} > \eta)=1$, thanks again to $X$ being a Brownian motion, we conclude that  
\be\label{limP=1}
\lim_{x\rightarrow z} \P^x(\rho_{D^c} < \rho_{B(x, r)^c})=1.
\ee
On the set $\{\rho_{D^c} < \rho_{B(x, r)^c}\}$, we have $\|X^x_{\rho_{D^c}}-x\| < r$. Hence, for any $\varepsilon>0$, by the continuity of $f$, $\delta$, and $t\mapsto X_t$, we can choose $r>0$ small enough such that for all $x\in D$ with $\|x-z\|\leq r$, \begin{equation}\label{<eps}
|f(x)-f(z)|\leq \varepsilon,\ \ |f(X^x_{\rho_{D^c}})-f(x)| \leq \varepsilon\ \ \text{on}\; \{\rho_{D^c} < \rho_{B(x, r)^c}\},\ \ \hbox{and}\ \ \E^x[\delta(\rho_{B(x,r)^c})]\ge 1-\eps. 
\end{equation}
By \eqref{limP=1}, for this fixed $r>0$, we can choose $0<r'<r$ such that for all $x\in D$ with $\|x-z\|<r'$,
\begin{align}\label{use r'}
\P^x(\rho_{D^c} \ge \rho_{B(x, r)^c})\leq \varepsilon.
\end{align}
As $z$ is regular to $D^c$, $J(z, D^c)=f(z)$. If follows that
\be\label{eq.eg.lemmaii5} 
\begin{aligned}
|J(x, D^c)-J(z, D^c)| \leq \E^x\left[| \delta(\rho_{D^c})f(X_{\rho_{D^c}})-f(z) | \right].
\end{aligned}
\ee
Now, for any $x\in D$ with $\|x-z\|<r'$, observe that
\begin{align*}
&\E^x[|\delta(\rho_{D^c})\ f(X_{\rho_{D^c}})-f(z) |\ 1_{\{\rho_{D^c} < \rho_{B(x, r)^c}\}}]\\
 &\leq  \E^x[| \delta(\rho_{D^c})f(X_{\rho_{D^c}})-f(x) |\ 1_{\{\rho_{D^c} < \rho_{B(x, r)^c}\}}]+|f(x)-f(z)|\\
&\leq  \E^x[|f(X_{\rho_{D^c}})-f(x)|\ 1_{\{\rho_{D^c} \leq \rho_{B(x, r)^c}\}}]+\E^x[|\delta(\rho_{D^c})-1| |f(X_{\rho_{D^c}})|\ 1_{\{\rho_{D^c} \leq \rho_{B(x, r)}\}}]+\varepsilon\\
&\leq (K+2) \varepsilon,
\end{align*}
where the last inequality follows from \eqref{<eps} and $0\le f\le K$ on $\partial D$. On the other hand,
\bee
\E^x[| \delta(\rho_{D^c})f(X_{\rho_{D^c}})-f(z) |\ 1_{\{\rho_{D^c} \ge \rho_{B(x, r)^c}\}}]\leq 2K \P^x(\rho_{D^c} \ge \rho_{B(x, r)^c})\leq 2K \varepsilon,
\eee
where the last inequality follows from \eqref{use r'}. We then conclude from \eqref{eq.eg.lemmaii5} that $|J(x, D^c)-J(z, D^c)|\leq (3K+2) \varepsilon$, which completes the proof.

\bibliographystyle{plain}
\bibliography{reference}

\end{document}